




\documentclass[sigconf]{aamas} 


\usepackage{balance} 



\setcopyright{ifaamas}

\usepackage{algorithm}
\usepackage{algorithmic}
\usepackage{graphicx} 
\usepackage{soul,xcolor}

\newif\ifcomments
\commentstrue
\ifcomments
\newcommand{\pallavi}[1]{{\textcolor{red}{Pallavi: { #1}}}}
\else
\newcommand{\pallavi}[1]{}
\fi


\usepackage{amsmath}
\usepackage{amsthm}
\usepackage{xspace}
\usepackage{todonotes}
\usepackage{enumitem}
\usepackage{nicefrac}
\usepackage{multirow}

\usepackage{comment}
\excludecomment{dontshow}

\usepackage{cleveref}

\newtheorem*{theorem*}{Theorem}

\newtheorem{theorem}{Theorem}
\newtheorem{proposition}{Proposition}

\newtheorem{lemma}{Lemma}
\newtheorem{claim}{Claim}
\newtheorem{obs}{Observation}
\newtheorem{remark}{Remark}
\newtheorem{cor}[theorem]{Corollary}
\newtheorem{definition}{Definition}
\theoremstyle{remark}

\newcommand{\FG}{{\sc Problem}\xspace} 
\newcommand{\pc}{\ensuremath{\mathrm{PC}}}
\newcommand{\cpc}{\ensuremath{\mathrm{C}}}
\newcommand{\m}{\ensuremath{\mathcal{M}}}

\newcommand{\bt}{\beta} %
\newcommand{\wh}{\widehat} %
\newcommand{\il}[1]{\todo[inline, color=yellow!50]{[Sushmita] #1}}
\newcommand{\td}[1]{\todo[inline, color=pink!50]{{\bf TO DO:} #1}}
\newcommand{\conf}{\ensuremath{{\tt co}}}



\newcommand{\ma}[1]{}

\newcommand{\Ma}[1]{{\textcolor{magenta}{#1}}}
\newcommand{\nc}[1]{{\textcolor{black}{#1}}}

\newcommand{\hm}{{\tt Ham}} 
\newcommand{\rep}{\mathcal{R}} 
\newcommand{\poly}[1]{\ensuremath{{\rm poly}(#1)}}

\newcommand{\hide}[1]{}

\newcommand{\app}{($\clubsuit$)}

\newcommand{\no}{{\rm no}}

\newcommand{\yes}{{\rm yes}}

\newcommand{\utility}{{utility}\xspace}

\newcommand{\tw}{\mathrm{tw}}
\newcommand{\pw}{\mathrm{pw}}
\newcommand{\FPT}{\ensuremath{{\sf FPT}}\xspace}
\newcommand{\XP}{\ensuremath{{\sf XP}}\xspace}
\newcommand{\NP}{\ensuremath{{\sf NP}}\xspace}

\newcommand{\NPH}{{\sf NP}-{\sf hard}\xspace}
\newcommand{\WOH}{\ensuremath{{\sf W[1]-hard}}\xspace}

\newcommand{\FPTAS}{\ensuremath{{\sf FPTAS}}\xspace}
\newcommand{\AS}{\ensuremath{{\sf AS}}\xspace}

\usepackage{mathtools}
\DeclarePairedDelimiter\ceil{\lceil}{\rceil}

\usepackage{breqn}
\usepackage{amsmath}

\newcommand{\BigO}[1]{\ensuremath{\mathcal{O}(#1)}}
\newcommand{\Co}[1]{\ensuremath{\mathcal{#1}}}
\newcommand{\sm}{\setminus}
\newcommand{\sse}{\subseteq}


\newcommand{\pname}{{\sc BCFEA}\xspace}
\newcommand{\gname}{conflict graph\xspace}
\newcommand{\instance}{$\mathcal{I} = (G,k, p, c,P, B)$\xspace}

\newcommand{\instancetwo}{$\mathcal{I} = (G,2, p, c,P, B)$\xspace}

\newcommand{\type}{\ensuremath{\lambda}}
\newcommand{\size}{\ensuremath{s}}

\newcommand{\items}{items\xspace}

\newcommand{\profit}{\ensuremath{\tau}}
\newcommand{\cost}{\ensuremath{\kappa}}

\newcommand\restr[2]{{
  \left.\kern-\nulldelimiterspace 
  #1 
  \vphantom{\big|} 
  \right|_{#2} 
  }}

%

\acmConference[AAMAS '24]{Proc.\@ of the 23rd International Conference
on Autonomous Agents and Multiagent Systems (AAMAS 2024)}{May 6 -- 10, 2024}
{Auckland, New Zealand}{N.~Alechina, V.~Dignum, M.~Dastani, J.S.~Sichman (eds.)}
\copyrightyear{2024}
\acmYear{2024}
\acmDOI{}
\acmPrice{}
\acmISBN{}



\acmSubmissionID{1230}


\title[Budget-feasible Egalitarian Allocation of Conflicting Jobs]{Budget-feasible Egalitarian Allocation of Conflicting Jobs}


\author{Sushmita Gupta}
\affiliation{
  \institution{The Institute of Mathematical Sciences}
  \city{Chennai}
  \country{India}
  }
\email{sushmitagupta@imsc.res.in}

\author{Pallavi Jain}
\affiliation{
  \institution{Indian Institute of Technology Jodhpur}
  \city{Jodhpur}
  \country{India}
  }
\email{pallav@iitj.ac.in}

\author{A. Mohanapriya}
\affiliation{
  \institution{The Institute of Mathematical Sciences}
  \city{Chennai}
  \country{India}
  }
\email{mohana@imsc.res.in}

\author{Vikash Tripathi}
\affiliation{
  \institution{The Institute of Mathematical Sciences}
  \city{Chennai}
  \country{India}
  }
\email{vikasht@imsc.res.in}


\begin{abstract}

Allocating conflicting jobs among individuals while respecting a budget constraint for each individual is an optimization problem that arises in various real-world scenarios. In this paper, we consider the situation where each individual derives some satisfaction from each job. We focus on finding a feasible allocation of conflicting jobs that maximize egalitarian cost, i.e. the satisfaction of the \nc{individual who is worst-off}. To the best of our knowledge, this is the first paper to combine egalitarianism, budget-feasibility, and conflict-freeness in allocations.  We provide a systematic study of the computational complexity of finding budget-feasible conflict-free egalitarian allocation and show that our problem generalizes a large number of classical optimization problems. Therefore, unsurprisingly, our problem is \NPH even for two individuals and when there is no conflict between any jobs. We show that the problem admits algorithms when studied in the realm of approximation algorithms and parameterized algorithms with a host of natural parameters that match and in some cases improve upon the running time of known algorithms. 

\end{abstract}



\keywords{Fair allocation, Maximize egalitarian cost, Parameterized Algorithms, Approximation Algorithms}


         
\newcommand{\BibTeX}{\rm B\kern-.05em{\sc i\kern-.025em b}\kern-.08em\TeX}


\begin{document}


\pagestyle{fancy}
\fancyhead{}


\maketitle 


\section{Introduction}
\label{sec:intro}


The division of resources among several interested parties that is satisfactory to all is a central question in game theory and is studied under the name of {\it fair division}. It is among the most commonly encountered challenges in life and industry, such as splitting an inheritance, sharing rent, a partnership dissolution, sharing food, assigning jobs to people or machines, and on and on. Unsurprisingly, research on this forms the basis of a large body of research that spans mathematics, economics, computer science, operations research, and social sciences. The applied nature of the problem notwithstanding, research on fair division can be dated back to mathematics, in the 1948 work of \cite{Steinhaus48}. 


\hide{
The problem is fascinating due to the fact that no common rules serve every purpose, due to differences among resources, preferences of individuals, and situations. This reality is reflected in the myriad ways in which the problem can be modeled, capturing the diversity among the nature of goods--divisible or indivisible; the nature of preferences--cardinal or ordinal; the nature of solution--based on envy-freeness or some objective function such as social welfare, Nash welfare, maximin, leximin, and several others that are too numerous to enumerate; and even the algorithmic paradigm--offline or online. Amanatidis et al. \cite{A..} have recently authored an extensive survey on fair division that }



In recent years, the theoretical AI community has delved into this question from multiple fronts, combining realistic constraints such as allocating conflicting resources that respect budget constraints. Chiarelli et al.~\cite{Chiarelli2022fair}, Biswas et al.~\cite{Biswas_FORC_2023}, and Hummel et al.~\cite{HummelHetland} have studied {\it conflict-free allocation}, one where the bundle of items assigned to any agent can contain at most one of the conflicting items. Such constraints arise quite naturally in scheduling problems, be it for jobs to machines, room assignments to sessions in a conference, or panel selections while avoiding conflicts of interest.

Each of the aforementioned papers study {\it conflict-free allocations} with different objective functions such as partial egalitarian (maximin) allocations~\cite{Chiarelli2022fair}, i.e allocations that maximize the utility of the worst-off agent (also called the Santa Claus guarantee), but where some items may remain unallocated; whereas \cite{HummelHetland} studies complete allocations with fairness criterion such as envy-freeness, maximin share guarantees, and Nash welfare; and \cite{Biswas_FORC_2023} compliments the work of Chiarelli by studying uniform and binary valuations for the setting of course allocation. The other direction in which fair division has been explored is with {\it budget} constraint. Clearly, in most realistic scenarios, a notion of ``budget'' comes into play quite naturally, it can represent money available to purchase items, time available to complete tasks, and so on. Indeed, in recent years papers on the budgeted version of fair division have appeared in top algorithmic research conference venues, both in theory, Garg et al.~\cite{Garg18-budget}, and in AI, Barman et al.~\cite{Barman23-budget}, where the objectives are Nash social welfare and envy-freeness. These papers look at the classical computational complexity divide and devise algorithms for restricted settings of utility function families (such as \cite{Biswas_FORC_2023,HummelHetland}) and for special graph classes (\cite{Chiarelli2022fair}) that capture the conflict among items. The precise definitions of the problem may vary in that \cite{Biswas_FORC_2023} studies the course allocation problem, which is a many-to-many allocation scenario. In some cases, they give approximation algorithms that work in polynomial time. 


In this paper, we extend this line of research by combining these two perspectives and study conflict-free allocation that respects budget constraints such that the worst-off agent attains a given level of utility/satisfaction, formally defined below. For example, consider the challenge facing a CEO of a startup who has a small number of employees, say $k$, and a large number of tasks, say $n$, that she needs to assign to her employees in a manner that respects logistical constraints so that no one is assigned more than one task in any given time, and also give each employee a level of satisfaction while ensuring that nobody's workload is excessive. This is a scenario in which our model, formally defined below, can be employed.


\medskip
\noindent\fbox{
\parbox{0.45\textwidth}{
{\sc Budgeted Conflict Free Egalitarian Allocation} (\pname)\\
{\bf Input:} A set of agents $[k]=\{1,\ldots, k\}$, a set of $n$ \items $V$, each agent's utility function $\{p_i:2^{V} \to \mathbb{Z}_{\geq 0}\}_{i\in [k]}$, and cost function $\{c_i:2^{V} \to \mathbb{Z}_{\geq 0}\}_{i\in [k]}$; a graph on the item set $G=(V, E)$, \nc{called the {\it conflict graph}}; and two positive integers $P$ and $B$. 

{\bf Question:} Does there exists a partition $\mathcal{S} = \{S_1, \ldots S_{k}\}$ of the item set $I$, called {\it bundles}, such that for each $i \in [k]$, $S_i$ is an independent set in $G$, bundle's cost $c_i(S_i)  \le B$, and utility $p_i(S_i) \ge P$?\\
\hide{\medskip An instance of \pname is denoted by $(G,k, \{p_i,c_i\}_{i\in [k]},P, B)$.}
}}

Moreover, in this paper we look at the fair allocation problem from the perspective of parameterized complexity that incorporates an upper bound on budget(or size) (such as \cite{Biswas_FORC_2023}) on (conflict) graph classes that generalize well-known tractable cases (such as \cite{Chiarelli2022fair}) that go beyond maximum degree of the conflict graph (such as studied by \cite{HummelHetland}). Our problem generalizes several classical optimization problems, such as {\sc Partition}, $k$-{\sc Coloring}, {\sc 3-Parition}, {\sc Santa Claus}, {\sc  Bin Packing}, and {\sc Knapsack}. Hence, unsurprisingly, our problem is \NPH even under strict restrictions and thereby sets us well to explore a host of input and output parameters individually and in combination.

Specifically, we point to Chiarelli et al.~\cite{Chiarelli2022fair}, whose work is closely related to ours, as considering various graph classes and exploring the classical complexity boundary between strong {\sf NP}-hardness and pseudo-polynomial tractability for a constant number of agents. Our analysis probes beyond the {\sf NP}-hardness of \pname and explores this world from the lens of parameterized complexity and approximation algorithm, thereby drawing out the suitability of natural parameters--such as the number of agents $k$, the number of \items $n$, the maximum size of each allocated set $s$, the number of distinct profit and cost values, \type. In addition to this, we explore the effectiveness of structural parameters of the underlying graph (such as the number of (connected) components, the treewidth, the pathwidth, and chordality)--towards yielding polynomial time algorithms when the parameters take on constant values. \nc{This is aligned with the emerging area of research in the theoretical AI community that has studied treewidth as a  parameter along with various objectives: connected fair division, Deligkas et al.~\cite{deligkas2021parameterized}; compact fair division, Madathil~\cite{madathil2023fair}; gerrymandering on planar graphs, Dippel et al. ~\cite{dippel2023gerrymandering}. }



\hide{, where the budget and conflict are conditions applied on the bundle of items (in their case, courses with potentially overlapping time intervals) just like us, but an item can contribute.}

\hide{
\ma{skip this?}
In the above example, we have three different constraints, {\it satisfaction}, {\it budget}, and {\it conflict}. Satisfaction and budget capture knapsack like bicriteria problem setting where there is an underlying tradeoff of maximizing a value while minimizing the other value (or keeping it below the threshold). The third aspect, conflict, does not ascribe a numerical value like the other two but like them can be easily tested. The notion of conflict renders itself naturally in fair division scenarios because the items themselves may have compatibilities and correlations that prevent them from being bundled together to be assigned to an agent. In a job assignment example such as ours, the conflict may be due to the time sensitivity of the tasks so that an agent cannot be assigned to multiple tasks that have to be executed in the same time window. }

\pname is quite recognizable in its restricted avatar as several well-known problems such as $k$-{\sc Coloring}, {\sc Partition}, {\sc Knapsack}, {\sc Bin packing}, {\sc Job Scheduling}~\cite{garey1979computers}, and {\sc Santa Claus}~\cite{Bansal06}. Their connection to \pname can be easily established if for each agent $i\in [k]$, \nc{we take the utility and cost functions to be additive over each item}. That is, we have functions $p_i, c_i:V \to \mathbb{Z}_{\geq 0}$ such that on any subset $S$ of items we define $f_i(S)=\sum_{v\in S} f_i(v)$ for both $f_i\in \{p_i, c_i\}$. We briefly discuss the connection of these problems with \pname. The notation $[k]$ is used to denote the set $\{1,\ldots,k\}$.

\begin{description}[wide=0pt]
\item[$k$-{\sc Coloring}:]Let $(H,k)$ be an instance of $k$-{\sc Coloring}. The goal is to decide if there exists a proper coloring of $H$ using at most $k$ colors. In the reduced instance of \pname, $G=H$; $p_i(v)=1$ and $c_i(v)=0$,  for each $i\in [k]$ and $v\in V(G)$; $P=0$; and $B=0$. 


\smallskip

\item[{\sc Partition}:] Let $S=\{s_1, \ldots, s_n\} \sse \mathbb{Z}$ define an instance of {\sc Partition.} The goal is to find a subset $X \sse S$ such that $\sum_{s_i\in X} s_i = \sum_{s_i \in S \sm X}s_i$. In the reduced instance of \pname, $G$ is edgeless; $k=2$; $p_i(v)=s_i$ and $c_i(v)=0$ for each $i \in [k]$ and $v\in V$; $P=\sum_{v \in V} p_i(v)/2$ and $B=0$. 

\smallskip

\item[3-{\sc Partition}:] The input consists of set $S$ of $3m$ elements, a bound $X \in \mathbb{Z}_{+}$, and a size $s(x) \in \mathbb{Z}_+$  for each $x\in X$ such that $\nicefrac{X}{4} < s(x) < \nicefrac{X}{2}$ and $\sum_{x\in S} s(x)=mB$. The goal is to decide whether there exists a partition of $S$ into $m$ disjoint sets $S_{1}, S_{2}, \ldots, S_{m}$ such that for each $1\leq i \leq m$, $\sum_{x\in S_i} s(x) = B$. In our reduced instance, we set $G$ to be edgeless, $k=m$, $p_i(v)=s_i$, and $c_i(v) = 0$ for each $i \in [k]$ and $v\in V$, $P=X$, and $B=0$.

\item[{\sc Egalitarian Fair Division}\footnote{Also called {\sc Santa Claus}}:]
Let $\Co{I}=(S,k, \{p_i\}_{i\in [k]}, \tau)$ denote an instance of {\sc Santa Claus} problem,~\cite{Bansal06}. The goal is to decide if there exists a partition $(S_1, \ldots, S_k)$ of the set $S$ such that $\min_{i\in k}\{ \sum_{s \in S_i} s\} \geq \tau$. In the reduced instance, $G$ is edgeless, $V=S$,  $c_i(v)=0$ and $p_i(v)=p_i(v)$ for each $i\in [k]$ and $v\in S$ and $B=0$;\hide{\footnote{This objective function is known by various names, Egalitarian and Santa Claus being popular in computational social choice and optimization research, respectively.}}  
\smallskip

\item[{\sc Bin Packing}:] Let $\Co{I}=(X=[n],W,k)$ be an instance of {\sc Bin Packing} where each item $i \in X$ has size $s_i$. The goal is to decide if there exists a partition of $X$ into $k$ parts $(X_1, \ldots, X_k)$ such that for each $j\in [k]$, $\sum_{i \in X_j} s_i \leq B$. In the reduced instance, $G$ is edgeless, and $p_i(v)=0$, $c_i(v)=s_i$ for each $i\in [k]$ and $v\in V$, $B=W$, and $P=0$\hide{i.e we only have cost (or size)}. 
\smallskip

\item[{\sc Knapsack}:] Let $\Co{I}=(V', \{p,w\}_{v\in V}, P',W)$ be an instance of {\sc Knapsack}, \nc{where $p$ and $w$ are the profit and cost functions, respectively, on the item set $V$, and $P',W \in \mathbb{N}$.} The goal is to find a subset of $V$ such that the sum of the profit values and the cost values of the item is at least $P'$ and at most $W$, respectively. In the reduced instance of \pname, we have $V= V' \cup \{d\}$, $G$ is edgeless, $k=2$, and $p_1(v)=p_v$, $p_2(v)=P'$, $c _{1}(v) = w_v$, $c_{2}(v) = 0$ for each item $v\in V$, and for the ``dummy'' item $p_1(d)=0$, $p_2(d)=P'$, $c_1(d)=B+1$, $c_2(d)=0$; and $P=P'$ and $B = W$.


\end{description}

We summarize the hardness results obtained from the above reductions as follows. 
\begin{remark}\label{rem:hardness}
Under ${\sf P} \neq \NP$ and $\FPT \neq \WOH$, we can infer the following.
\begin{enumerate}[label=(\alph*),wide=4pt]
    \item\label{first} Due to the reduction from {\sc Partition}, \pname is \NPH even for $k=2$, there are no conflicts, and $B=0$. Thus, we cannot hope for an $\FPT(k+\tw+B)$ time algorithm.\footnote{For a parameter $\zeta$, $\FPT(\zeta)$ refers to running time of an algorithm that runs in $f(\zeta)\cdot m^{\BigO{1}}$, for any computable function~$f$.}
    \item\label{second} Due to the reduction from $3$-{\sc Coloring}, \pname is \NPH even for $k=3,P=0,B=0$, unit profit function and zero cost function. Thus, we cannot hope for an $\FPT(k+P+B)$ algorithm even when values are encoded in unary;
    \item\label{third} Due to reduction from $3$-{\sc Partition}, \pname is \NPH even when $s=3, B=0$, and graph is edgeless. Thus, we cannot hope for an $\FPT(s +B)$ or $\FPT(\tw+B)$ even when values are encoded in unary.

    \end{enumerate}
Here, $\tw$ and $s$ denote the treewidth of the \gname and the maximum size of a bundle, respectively.
\end{remark}





Thus, the problem {\sc Fair k-division of Indivisible Items} studied by Chiarelli et al.~\cite{Chiarelli2022fair} can be reduced to \pname if for every agent $i\in[k]$ and item $v\in V$,

$c_i(v)=1$ for every agent $i\in[k]$ and item $v\in V$, and the budget $B=n$, while the profit of each item $p_i(v)$ \hide{They consider various graph classes and explore the classical complexity boundary between strong {\sf NP}-hardness and pseudo-polynomial tractability for a constant number of agents. Contrastingly, our analysis probes beyond the {\sf NP}-hardness of \pname and explores this world from the lens of parameterized complexity and approximation algorithm, thereby drawing out the suitability of natural parameters--such as the number of agents, $k$, the number of \items $n$, the maximum size of each allocated set $s$, the number of distinct profit and cost values, \type, and the structural parameters of the underlying graph (such as the number of connected components, the treewidth, the pathwidth, and chordality)--towards yielding polynomial time algorithms when the parameters take on constant values.}

\hide{
\medskip
\noindent\fbox{
\parbox{0.45\textwidth}{
{\sc $k$-Fair Partitioning of Conflict Graph} (in short, $k$-FPCG)\\
{\bf Input} -- A graph $G=(V,E)$ with $|V| = n$,
a cost function $c: V \longmapsto \mathbb{Z}_{\ge 1}$, a profit function
$p: V \longmapsto \mathbb{Z}_{\ge 1}$, and integers $B$, $P$.\\
{\bf Question} -- Does there exist a partition of 
$V$ with at most $k$ sets $ \{S_1, S_2, \ldots S_{k}\}$ such that for each $i \in [k]$, $G[S_i]$ is edgeless; and $c(S_i) = \sum\limits_{v \in S_{i}} c(v) \le B$ and $p(S_i) = \sum\limits_{v \in S_{i}} p(v) \ge P$?\\
\medskip
An instance of {\sc $k$-FPCG} is denoted by $(G,c,p,B,P)$.
}}

}




\paragraph{Our Contributions}

Due to ease of exposition we will present the proofs for the case of {\it identical valuation and cost functions}, as defined formally below.
  


\noindent\fbox{
\parbox{0.45\textwidth}{
{\bf Input:} An instance of \pname where for each agent $i\in [k]$, we have identical \utility and cost functions, $p_i=p$ and $c_i=c$. \\
{\bf Output:} A solution for \pname.

\smallskip
An instance of \pname is denoted by \instance.
}}

\smallskip
However, as we discuss in \Cref{ss:generalizing-to-gen-valuations}, each of our algorithms can be easily extended to the general setting of non-identical \utility and cost functions. For simplicity of exposition, throughout the paper, we will use $c_v$ and $p_v$ to denote the cost $c(v)$ and the \utility $p(v)$, respectively, for $v \in V$.


Our work in this paper is a deep dive into the combinatorial and optimization aspects of \pname whereby we show a wide array of algorithmic results that explore the \NPH problem with respect to various parameters. As explained earlier, the formal definition of the problem can be seen as model that captures various well-known and well-studied combinatorial optimization problems. In this sections, we will discuss the relevance of the results presented in this paper vis-a-vis the known literature about the underlying problems. Here, we will discuss the wider backdrop of this work.




\begin{remark}
\label{remark:one_agent}
  When there is exactly one agent, \pname is rather simple to solve as the \gname must be edgeless and the sum of the costs and utilities of all the objects must be at most $B$ and at least $P$, respectively. All these conditions are simple to test in linear time.  
\end{remark}

Next, we discuss the results for the other values of $k$.  


\begin{description}[wide=0pt]

\item[{\bf Number of agents is two:}]\pname exhibits a dichotomy, whereby if the \gname is edgeless, the problem is \NPH and it is polynomial-time solvable if the graph is connected, \Cref{t:poly_2fpcg}. \nc{More generally, we show that \pname is \NPH when the graph is disconnected, Theorem~\ref{thm:disconnected-cg}. However, we supplement this by exhibiting that \pname has an \FPT algorithm with respect to the number of components, \Cref{thm:fpt_cc}.} In contrast to \Cref{rem:hardness}, we note that when $k=2$, \pname can be solved in polynomial-time when the values are encoded in unary, \Cref{thm:k_is_2_graph_arbitrary}.

\nc{We note that Chiarelli et al.~\cite{Chiarelli2022fair} study a restricted version of \pname, where there is no budget constraint, that is, there is no cost associated with any bundle. Our algorithm for the $k=2$ case, 
when applied to their setting (Theorem~\ref{thm:k_is_2_graph_arbitrary}) has a better running time $\BigO{n P^{2}}$, where $P$ is the utility of the worst-off agent, than the one Chiarelli et al. propose in ~\cite{Chiarelli2022fair} (Lemma~$13$), which is $\BigO{n Q^4}$, where $Q$ denotes the sum of the profits of all the items (clearly, $Q \ge P$).} 


\hide{We know that {\sc Bin Packing} is \NPH even when $k=2$. Thus, we cannot even expect an \XP algorithm (i.e it runs in time $n^{f(k)}$) let alone an \FPT algorithm (runs in time $f(k)n^{\BigO{1}}$. However, there is a pseudo-polynomial time algorithm (i.e  when size of the items are polynomially bounded integers, then the problem can be solved in polynomial time).} 

\hide{Moreover, Jansen et al.~\shortcite{Marx13-BinPacking} show that even {\sc Unary Bin Packing} (where values are encoded in unary) is \WOH with respect to $k$. \Ma{Thus, \pname cannot have \FPT algorithm with respect to $k$, even when the \gname is edgeless and the utility and cost values are encoded in unary.} \pallavi{as discussed the relation between two problems earlier}}





\item[{\bf Number of agents is arbitrary:}] 


We present an exact-exponential algorithm, \Cref{thm:exact-algorithm}, with running time $\Co{O}^{\star}(2^{n})$\footnote{$\Co{O}^{\star}(\cdot)$ suppresses polynomial or poly-logarithmic factors.} for \pname, where $n$ denotes the number of items. This is asymptotically tight given that under the Exponential Time Hypothesis (ETH) we know that $k$-{\sc Coloring} cannot be solved in $2^{o(n)}$ time. 


As a consequence of the hardness of \pname with respect to $k$, it is natural for us to probe the parameterized complexity of \pname with respect to $k$ with {\it domain restrictions}, which in the context of our study means that we have restrictions on the graph, the number of pairs of \utility and cost values (formalized as the {\it type}), the range of \utility and cost values (or the size of the encoding) and the {\it bundle size}, defined to be the maximum size of a set assigned to an agent in a solution of \pname. 




{\it Each of the following results assumes that the values are encoded in unary.}
 \begin{itemize}[wide=4pt]
     \item \pname has an $\FPT(k+\tw)$ algorithm, \Cref{thm:ppoly-tw}. As discussed in the technical section, if the \gname is a chordal graph, then any \yes-instance of \pname has treewidth is at most $k$. Specifically, if the \gname is an interval graph, then the pathwidth is at most $k$. Hence, when $k$ is a constant, \pname admits a polynomial-time algorithm when $G$ is an interval graph,~\Cref{cor:chordal}.
    Interval graphs are of practical importance due to their relevance in {\sc Job Scheduling}.



\item{\bf (Type)} \pname admits an $\FPT(k+\tw+\type)$ algorithm,~\Cref{thm:const_type}, where $\type$ denotes the type. Consequently, when both $k$ and $\type$ are constants, the problem is polynomial-time solvable,~\Cref{cor:constant_type}. Due to \Cref{rem:hardness}, we cannot expect an $\FPT(k+\tw)$ algorithm or $\FPT(k+\type)$ algorithm.

\item{\bf (Bundle size)} Due to    \Cref{rem:hardness}\ref{second} we cannot expect an $\FPT(k+P+B)$ algorithm. Hence, we look at $s$, the {\it bundle size}, as a parameter. The bundle size is likely to be small, but again due to \Cref{rem:hardness}\ref{third} we cannot expect an $\FPT(s)$ algorithm either. But for the case $s=2$, we have a polynomial-time algorithm,~\Cref{thm:bundlesize-2}. More generally, we show that we can have an $\FPT(k+s)$ algorithm, \Cref{thm:agent-and-bundle-size}.

     \item {\bf (FPT-AS)} \pname has an $\FPT(k+\tw)$ algorithm that outputs a solution in which every agent has profit at least $P/(1+\epsilon)$ and cost at most $(1+\omega)B$ (such algorithms are known as \FPT-{\it approximation schemes}), \Cref{thm:FPTAS-tw}.






     \end{itemize}


\noindent{\bf Choice of graph classes:} As discussed above, \pname generalizes $k$-{\sc Coloring}, it is the only graph problem among the ones discussed above. Thus, any hope of designing an algorithm for \pname rests on the tractability of $k$-{\sc Coloring} in the \gname. Hence, our search for amenable graph classes is narrowed down to interval graphs, chordal graphs, and graphs with bounded treewidth. In chordal graphs, it is polynomial-time solvable, and in graphs of bounded treewidth it is $\FPT(\tw)$,~\cite{pc_book}. Interval graphs are of special significance to applications related to {\sc Job Scheduling}. Our work in this article runs the gamut of designing algorithms for appropriate small valued parameters and graph classes where the underlying combinatorial problem is tractable. We bookend our algorithmic results with hardness borrowed from {\sc Bin Packing} in addition to $k$-{\sc Coloring}.
\end{description}

Please refer to Table~\ref{table:results} for an overview of the results in this paper. In the technical section, we have explained the key ideas in each result. 

\begin{table*}[ht!]
\begin{center}
\begin{tabular}{|c|c|c|c|}
\hline
\#Agents & Graph &Complexity & Reference  \\
\hline
1 & General & \BigO{n} & \Cref{remark:one_agent} \\
\hline
\multirow{4}{*}{} & Connected & \BigO{n} & \Cref{t:poly_2fpcg}  \\ 
2 & $r$  components, $r>1$  & \NPH & \Cref{thm:disconnected-cg}\\
& $r$ components  & \BigO{2^r n} & \Cref{thm:fpt_cc}   \\
 & General & \BigO{(BP)^{2} n} & \Cref{thm:k_is_2_graph_arbitrary}  \\
\hline
\multirow{5}{*}{}  & Bounded Treewidth  & \BigO{\tw^{k} (\alpha \gamma)^{2k}n }& \Cref{thm:ppoly-tw}  \\
\BigO{1} & Interval Graph & \BigO{k^{k} (\alpha \gamma)^{k} n } & \Cref{cor:chordal}   \\
& Bounded Treewidth & $\BigO{\tw^{k} (\log_{1+\epsilon} \alpha  \log_{1+\omega} \gamma )^{2k} n^{\BigO{1}} }$ & \Cref{thm:FPTAS-tw}\\
 & (Bi-criteria approximation) & $\left( \frac{1}{1+\epsilon},1+\omega \right)$-factor approx& \\
\hline
\multirow{4}{*}{}  & General & $\Co{O}^{\star}(2^{n})$ & \Cref{thm:exact-algorithm}   \\
Arbitrary & General & $\FPT(k+s)$  & \Cref{thm:agent-and-bundle-size} \\
 & Bounded Treewidth & $\BigO{\tw^k (\type+1)^{2k}  k  n}$  & \Cref{thm:const_type} \\
\hline
\end{tabular}
\end{center}
\caption{Overview of our algorithms. Here, $\gamma=\sum_{v\in V}c(v)$, $\alpha=\sum_{v\in V}p(v)$, $\lambda = |\{(c_v,p_v)\mid \forall v \in V\}|$, $\tw$ denote the treewidth of $G$ in an instance \instance of \pname, $s$ denote the size of bundle, and $\epsilon, \omega \in (0,1]$. }
\label{table:results}
\end{table*}


\newcommand{\cffa}{{\sc CFFA}\xspace}

\paragraph{Related Work.} In addition to the works discussed earlier in the Introduction, we will further discuss some more work that are related to our work in this paper.
A well-known framework within which fair allocation has long been studied is in the world of {\sc Job Scheduling} problems on non-identical machines. In this scenario, the machines are acting as agents and the jobs are the tasks such that certain machines are better suited for some jobs than others and this variation is captured by the ``satisfaction level" of the machine towards the assigned jobs. Moreover, the jobs have specific time intervals within which they have to be performed and only one job can be scheduled on a machine at a time. \hide{Thus, the subset of jobs assigned to a single machine must respect these constraints, and the objective can be both maximization and minimization as well as just feasibility.} Results on the computational aspect of fair division that incorporates interactions and dependencies between the items are relatively few. This is the backdrop of our work in this article.  A rather inexhaustive but representative list of papers that take a combinatorial approach in analysing a fair division problem and are aligned with our work in this paper is \cite{cheng2022restricted,bezakova2005allocating,kurokawa2018fair,DBLP:conf/atal/EbadianP022,DBLP:conf/atal/BarmanV21,ahmadian2021four,Chiarelli2022fair,Warut21,woeginger1997polynomial}. In particular, we can point to the decades old work of Deuermeyer et. al~\cite{deuermeyer1982scheduling} that studied a variant of {\sc Job Scheduling} in which they goal is to assign a set of independent jobs to identical machines in order to maximize the minimal completion time of the jobs. Their {\sf NP}-hardness result for two machines (i.e. two agents in our setting) is an early work with a similar flavor. They analyse a well-known heuristic called the LPT-algoirthm to capture best-case performance and show that its worst case performance is 4/3-factor removed from optimum.



Moreover, we note that conflict like constraints in an underlying graph have also been studied in the context of {\sc Knapsack}~\cite{knapsack-conflict09,knapsack-conflict17} and {\sc Bin Packing}~\cite{BinPacking-conflict} have also been studied. Interestingly, Pferschy and Schaue~\cite{knapsack-conflict09} studies {\sc Knapsack} with conflict and present pseudo-polynomial algorithms for graphs of bounded treewidth and chordal graphs. From these algorithms, they derive fully polynomial time approximation schemes (\FPTAS). More generally, there can be other combinatorial notions of quality that capture scenarios where certain subsets of items, based on what they represent, should be bundled together and thus be assigned to the same agent. Conflict is a special case of this in which we want independence between the items, connectivity is another well-studied notion. Such combinatorial properties are amenable to graph-theoretic modeling and have been studied as well~\cite{Bouveret17,DARMANN20,Guy09,Factorovich20}. Suksompong~\cite{Warut21} surveys the landscape of fair division with a host of combinatorial constraints such as connectivity, cardinality, matroid, geometric, separation, budget, and conflict.  

\hide{Some papers that have used conflict graphs to capture various constraints on items/jobs that are related to compatibility are~\cite{DARMANN20, Guy09,Factorovich20,Bouveret17}. Results pertaining to approximation algorithms exist, and Bansal et al.\cite{Bansal06} studied the {\sc Santa Claus} problem, which corresponds to the variant of the problem studied by Chiarelly et al. when the number of agents is part of the input. Since then various approximation results have appeared exploring different concepts of objective functions and various approximation measures, see, e.g., \cite{Asadpour10,Chakrabarty09}.}

Other paradigmatic problems that are subsumed by \pname is {\sc Bin Packing} and {\sc Knapsack} and the literature around them--both offline and online--is so immense that we cannot faithfully survey it here and point the reader to \cite{BinPacking-survey} and \cite{Knapsack-book} for a deeper look. 


We would like to conclude this discussion by noting that the literature on fair division is {\it vast} due to the myriad of variations among the nature of goods--divisible or indivisible; the nature of preferences--cardinal or ordinal; the nature of solution--based on envy-freeness or some objective function such as social welfare, Nash welfare, maximin, leximin, and several others that are too numerous to enumerate; and even the algorithmic paradigm--offline or online. Amanatidis et al.~\cite{fair_allocation-combined} have recently authored an extensive survey on fair division that describes the research on the topic from the perspective of theoretical computer science.







\section{Preliminaries}
\noindent {\em Graph Theoretic Notations:} For the graph theoretic notations, not defined here, we refer to Diestel's~\cite{Diestel-book}.  We use the standard notation $[n]$ to denote the set $\{1,2, \ldots, n\}$. Let $G$ be a graph with vertex set $V(G)$ and edge set $E(G)$. The set $N_{G}(v) =\{u \in V \mid uv \in E(G)\}$ is called the \emph{open neighbourhood} and $N_{G}[v] = N_{G}(v) \cup \{v\}$ is called the \emph{closed neighbourhood} of $v$ in $G$. The {\em degree} of a vertex $v$ is the cardinality of $N_{G}(v)$, denoted by $d_{G}(v)$. A vertex of degree $0$ is called an isolated vertex. For a set $S \subseteq V$, $G[S]$ denotes the subgraph of $G$, induced by set $S$, whose vertex is $S$ and edge set is $\{xy \in E(G) \mid x, y \in S \}$.  A subset $S\subseteq V(G)$ is called an {\em independent set} (resp. {\it clique}) in $G$ if the subgraph induced by $S$ is edgeless (resp. complete).  


\hide{A vertex of degree zero is called an isolated vertex. For a set $S \subseteq V$, $G[S]$ denotes the subgraph of $G$, induced by set $S$, whose vertex is $S$ and edge set is $\{xy \in E(G) \mid x, y \in S \}$.  Let $S \subseteq V(G)$. }


Consider sets $A$, $B$, and $C$ such that $A \subseteq B$. If $f:B \mapsto C$ is a function then the {\em restriction} of $f$ to $A$, denoted as $\restr{f}{A}$, is a function $\restr{f}{A} : A \mapsto C$ defined as $\restr{f}{A}(x) =f(x)$ for all $x \in A$. Let $G$ be a graph. For a subset $S \sse V(G)$, $k$ subsets $X_1,X_2, \ldots, X_k$ of $S$ are called {\em $k$-independent partition} of $V(G)$ if $\bigcup_{i \in [k]}X_{i} = S$ and for each $i \in [k]$, $X_{i}$ is an independent set in $G[S]$. A function $f:V(G) \mapsto [k]$ is called a {\em $k$-coloring} of $G$, if for any edge $uv \in E(G)$, $f(u) \neq f(v)$. Let $A,B \subseteq V(G)$ such that $A \subseteq B$. Let $f$ and $g$ be a $k$-coloring of $G[A]$ and $G[B]$, respectively. We say that coloring $g$ {\em agrees} with coloring $f$ on $A$ if $\restr{g}{A} = f$. More precisely, $g(u) = f(u)$ for all $u \in A$.

For an instance \instance of \pname, we denote the total cost and utility of items to be $\gamma=\sum_{v\in V}c(v)$ of $V$ and $\alpha=\sum_{v\in V}p(v)$, respectively. We say that a partition $(S_1, \ldots, S_k)$ {\it satisfies the cost and profit constraint} $(P,B)$ if for each $i \in [k]$, $c(S_i) \le B$ and $p(S_i) \ge P$. Moreover, if each $S_i$ is an independent set in $G$, then call each $S_i$ a {\it bundle}.  We define the {\it type} of \Co{I} to be $|\{(c_i,p_i)\mid \forall i\in [k]\}|$, denoted by $\lambda$. 
As discussed, {\sc $k$-Coloring} is a special case of \pname. Due to this, we may assume, without loss of generality, that the \gname in an instance of \pname does not have clique of size greater than $k$. 
\begin{obs}\label{obs:bounded-clique-size}
In a given instance of \pname the underlying graph has clique size of at most $k$.
\end{obs}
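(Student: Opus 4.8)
The plan is to argue by the pigeonhole principle. Suppose, towards a contradiction, that the \gname $G$ in the given instance contains a clique $K$ on $k+1$ vertices, yet the instance admits a feasible solution $\mathcal{S} = (S_1, \ldots, S_k)$. Since $\mathcal{S}$ is a partition of $V$, the $k+1$ vertices of $K$ are distributed among the $k$ bundles $S_1, \ldots, S_k$, so by pigeonhole there is an index $i \in [k]$ and two distinct vertices $u, v \in K$ with $u, v \in S_i$. Because $K$ is a clique we have $uv \in E(G)$, and therefore $G[S_i]$ is not edgeless, contradicting the requirement that every bundle be an independent set in $G$.

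Hence any instance of \pname whose \gname contains a clique of size greater than $k$ is a no-instance, and so in analysing \pname we may assume without loss of generality that $G$ has clique number at most $k$; this is exactly the content of \Cref{obs:bounded-clique-size}. I would add the brief remark that for the graph classes actually exploited in the paper — chordal graphs and graphs of bounded treewidth — this assumption can moreover be imposed constructively, since the clique number of a chordal graph equals one plus its treewidth and is in general at most one plus the treewidth, and is computable in polynomial time on these classes; thus a $(k+1)$-clique, if present, can be detected and the instance rejected outright.

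The only subtlety worth flagging — and it is minor — is that on arbitrary graphs deciding whether $\omega(G) > k$ is itself intractable, so \Cref{obs:bounded-clique-size} is most safely read as a structural property of yes-instances (or, equivalently, of the inputs that survive an obvious feasibility check) rather than as something we always test in polynomial time. This reading is all that is needed downstream, where the observation is invoked to bound the treewidth or pathwidth of the \gname of a yes-instance in terms of $k$.
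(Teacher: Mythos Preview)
Your argument is correct and is essentially the same as the paper's: the paper simply notes that since $k$-\textsc{Coloring} is a special case of \pname, one may assume without loss of generality that the \gname has no clique of size exceeding $k$, which is exactly your pigeonhole observation spelled out. Your additional remarks about detectability on chordal/bounded-treewidth graphs and the caveat for general graphs are sound and useful, though the paper itself does not make them explicit.
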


\hide{\noindent{\it Approximation Algorithm.} A fully polynomial-time approximation scheme {(\FPTAS)} is a family of algorithms
$\{A_{\epsilon} \}$, such that for each $\epsilon > 0$, there is a $(1 + \epsilon)$-approximation algorithm $A_{\epsilon}$ with running time bounded by a polynomial in $1/\epsilon$ and the size of input instance. \hide{An \FPT approximation scheme (\FPT-\AS)\footnote{Not to be confused with a fully polynomial-time approximation scheme (\textup{FPTAS}).} is a \emph{fixed-parameter tractable approximation scheme} is an approximation algorithm that, given an instance $I$ of a maximization problem $\Pi$, an integer $k$, and any $\epsilon > 0$, returns a $\frac{1}{(1+\epsilon)}$-approximate solution in $f(k)\cdot |I|^{\BigO{1}}$ time, where $f$ is an
arbitrary computable function depending on the parameter $k$.}}


\hide{\noindent{\it Parameterized complexity.} 
The goal of parameterized complexity is to find ways of solving \NPH problems more efficiently than exhaustive search: the aim is to
restrict the combinatorial explosion to a parameter that is likely to 
much smaller than the input size in families of input instances. Formally, a {\em parameterization}
of a problem is assigning an integer $k$  to each input instance of the problem.  We
say that a parameterized problem is {\em fixed-parameter tractable
  }(\FPT)  if there is an algorithm that solves the problem in time
$f(k)\cdot |I|^{\BigO{1}}$, where $|I|$ is the size of the input and $f$ is an arbitrary computable function depending on the parameter $k$
only. We will refer to such an algorithms as an \FPT algorithm and the problem to have an $\FPT(k)$ algorithm. For more details on the subject, we refer to the textbooks~\cite{ParamAlgorithms15b,fg,downey}. }




\medskip
\noindent {\em Tree and Path Decomposition:}

\begin{definition}
  A tree decomposition of a graph $G=(V,E)$ is a pair $(T, \{\bt_t\}_{t \in V(T)})$ where $T$ is tree and $\bt_{t} \subseteq V$, satisfying the following properties:
  \begin{enumerate}
      \item $\bigcup_{t \in V(T)} \bt_{t} =  V$.
      \item for each edge $uv \in E$, there exists a bag that contains both $u$ and $v$.
      \item\label{tw:coherence} Let $T_v = \{t \mid v \in \bt_t\}$. The subgraph induced by the nodes in the set $T_v$ is a subtree of $T$.
  \end{enumerate}
\end{definition}

The {\em width} of a tree decomposition $(T, \{\bt_t\}_{t \in V(T)})$ of graph $G$ is given by $\max \{|\bt_t| -1 \mid t \in V(T) \}$. The minimum width of a tree decomposition of a graph $G$ is called the {\em treewidth} of $G$. \nc{For a node $t \in V(T)$, let $T_{t}$ denotes the subtree of $T$ rooted at $t$ and $V_{t} = \bigcup_{t \in V(T_t)}\bt_t$.}  At each node $t$, we associate a subgraph $G_t$ of $G$ where, $G_{t} = G[V_t]$. The following property of a tree decomposition is very useful while designing algorithms on tree decomposition. 

\begin{proposition}{\rm \cite{pc_book}}
\label{thm:nice_tree_prop}
    Let $(T, \{\bt_t\}_{t \in V(T)})$ be a tree decomposition of a graph $G$ and $t_1t_2$ be an edge of $T$. Further, let $T_{t_1}$ and $T_{t_2}$ be two connected components of $T \setminus \{t_1t_2\}$, $A =  \left(  \bigcup_{t \in V(T_{t_1})} \bt_t \right)  \setminus (\bt_{t_1} \cap \bt_{t_{2}})$, and $B =  \left( \bigcup_{t \in V(T_{t_2})}\bt_t \right) \setminus (\bt_{t_1} \cap \bt_{t_{2}})$. Then no vertex of $A$ is adjacent to a vertex of $B$.
\end{proposition}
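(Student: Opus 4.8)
The plan is to argue by contradiction, leaning entirely on the connectedness axiom (property~\ref{tw:coherence} in the definition of a tree decomposition). Suppose there were a vertex $u \in A$ and a vertex $v \in B$ with $uv \in E(G)$. By the second property of a tree decomposition, some bag $\bt_t$ contains both $u$ and $v$; the node $t$ lies either in $T_{t_1}$ or in $T_{t_2}$, and I would split into these two symmetric cases.

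First I would treat the case $t \in V(T_{t_2})$. Then $u$ appears in a bag on the $t_2$-side of the torn edge, namely $\bt_t$, while $u \in A$ guarantees that $u$ also appears in some bag $\bt_{t'}$ with $t' \in V(T_{t_1})$. Consider $T_u = \{t'' \in V(T) \mid u \in \bt_{t''}\}$, which by property~\ref{tw:coherence} induces a subtree of $T$. This subtree meets both $V(T_{t_1})$ and $V(T_{t_2})$; since deleting the single edge $t_1t_2$ disconnects those two vertex sets, every path in $T$ from $t'$ to $t$ must traverse $t_1t_2$ and hence pass through both $t_1$ and $t_2$. A connected subtree containing vertices on both sides therefore contains $t_1$ and $t_2$ themselves, so $u \in \bt_{t_1} \cap \bt_{t_2}$, contradicting $u \in A$ since $A$ explicitly excludes $\bt_{t_1}\cap\bt_{t_2}$.

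The remaining case, $t \in V(T_{t_1})$, is the mirror image: now it is $v$ that sits in a bag $\bt_t$ on the ``wrong'' side while $v \in B$ places $v$ in a bag indexed by $V(T_{t_2})$, so the same subtree argument applied to $T_v = \{t'' \mid v \in \bt_{t''}\}$ forces $v \in \bt_{t_1}\cap\bt_{t_2}$, contradicting $v \in B$. Since both cases are impossible, no vertex of $A$ can be adjacent to a vertex of $B$.

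The only substantive ingredient is the observation that a connected subtree of $T$ straddling the two components of $T \sm \{t_1t_2\}$ must contain the endpoints $t_1, t_2$ of the removed edge; everything else is bookkeeping against the definitions. I therefore do not expect a genuine obstacle here — this is the standard ``separator property'' of tree decompositions — and the only care needed is to keep straight which of $u$ and $v$ gets pinned into $\bt_{t_1}\cap\bt_{t_2}$ in each case.
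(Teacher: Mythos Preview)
Your argument is correct and is exactly the standard proof of the separator property of tree decompositions. Note, however, that the paper does not supply its own proof of this proposition: it is stated with a citation to~\cite{pc_book} and used as a black box, so there is no in-paper proof to compare against. Your write-up matches the textbook argument one would find in that reference.
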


For our purpose, we rather use a special tree decomposition of a graph, called {\em nice tree decomposition}. In a nice tree decomposition, we consider $T$ to be a rooted tree with root $r$, and other vertices are identified as a leaf node, vertex introduce node, forget node, or join node. In particular, we consider nice tree decomposition with edge introduce node. In a nice tree decomposition having edge introduce nodes, the following conditions are satisfied:
\begin{itemize}
    \item $\bt_r = \emptyset$ and $\bt_{t} = \emptyset$ for each leaf node $t$ of $T$.
    \item {\em Introduce node}: A node $t$ with exactly one child $t'$ such that \nc{$\bt_{t} = \bt_{t'} \cup \{v\}$} for some vertex $v \notin \bt_{t'}$ ($v$ is introduced at $t$). 
    \item {\em Forget node}: A node $t$ with exactly one child $t'$ such that $\bt_{t} = \bt_{t'} \setminus \{v\}$ for some vertex $v \in \bt_{t'}$ ($v$ is forgotten at $t$). 
    \item {\em Join node}: A node $t$ with exactly two children $t_1$ and $t_2$ such that $\bt_{t} = \bt_{t_1} = \bt_{t_2}$. 
\end{itemize}

We have the following relation between the path decomposition and nice path decomposition of a graph $G= (V,E)$, see~\cite{pc_book} (Chapter $7$, Lemma~$7.4$).

\begin{proposition}\cite{pc_book}
\label{t:nice_tree_time}
    If a graph $G$ admits a tree decomposition of width at most $\tw$, then it also admits a nice tree decomposition of width at most $\tw$. Moreover, given a tree decomposition $(T, \{\bt_t\}_{t \in V(T)})$ of $G$, a nice tree decomposition of $G$ of same width, can be computed in time $O(\tw^{2} \cdot \max(|V(T)|, |V(G)|)$ that has at most $O(\tw \cdot |V(G)|)$ nodes.
\end{proposition}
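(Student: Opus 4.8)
The plan is to start from the given tree decomposition $(T,\{\bt_t\}_{t\in V(T)})$, root $T$ at an arbitrary node, and then apply a bounded chain of \emph{local surgeries} — each of which preserves the three axioms of a tree decomposition and never increases the width — until the decomposition has the required shape. The only axiom one has to watch in every surgery is \Cref{tw:coherence} (connectivity of the set of bags containing a fixed vertex); the first two axioms are trivially maintained throughout. \emph{Stage~0 (shrink $T$):} as long as $T$ has an edge $t_1t_2$ with $\bt_{t_1}\subseteq\bt_{t_2}$, contract it and keep the bag $\bt_{t_2}$; each such contraction clearly keeps all axioms and the width, and when the loop halts no bag along a root-to-leaf path contains the next one, so — since \Cref{tw:coherence} forbids a vertex from leaving the bags and re-entering — a standard counting argument yields $|V(T)|=\BigO{|V(G)|}$.

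\emph{Stage~1 (bounded arity and join nodes):} replace every node $t$ with children $c_1,\dots,c_d$, $d\ge 2$, by a binary tree on $d-1$ fresh copies of $t$ (all carrying the bag $\bt_t$), hanging $c_1,\dots,c_d$ off its leaves; afterwards every node has at most two children, and wherever it has two, they carry its own bag, so these become the join nodes. \emph{Stage~2 (introduce/forget chains):} for every remaining edge $t$–$t'$ with $t'$ the unique child of $t$ and $\bt_t\ne\bt_{t'}$, subdivide it into a path that first forgets the vertices of $\bt_{t'}\setminus\bt_t$ one at a time (forget nodes) and then introduces the vertices of $\bt_t\setminus\bt_{t'}$ one at a time (introduce nodes); this inserts $\BigO{\tw}$ nodes on that edge, and one checks connectivity of each $T_v$ is preserved. \emph{Stage~3 (empty ends):} prepend a forget chain above the root ending in an empty bag, and append below every leaf a forget chain ending in an empty bag, $\BigO{\tw}$ nodes at each end.

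\emph{Stage~4 (edge-introduce nodes):} for every edge $uv\in E(G)$ there is, by axiom~2, a bag containing both endpoints; insert an edge-introduce node for $uv$, using the standard device of placing, immediately below the forget node that forgets a vertex $v$, the edge-introduce nodes for all still-present edges incident to $v$, so that each edge is introduced exactly once. Since a graph admitting a tree decomposition of width $\tw$ has $\BigO{\tw\cdot|V(G)|}$ edges, this adds $\BigO{\tw\cdot|V(G)|}$ nodes.

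It then remains to verify the two quantitative claims. Stages~1–3 add $\BigO{\tw}$ nodes per node of the post-Stage-0 tree, i.e. $\BigO{\tw\cdot|V(G)|}$ in total, and Stage~4 adds another $\BigO{\tw\cdot|V(G)|}$, so the resulting nice tree decomposition has $\BigO{\tw\cdot|V(G)|}$ nodes; each surgery is essentially one pass over the current tree comparing or sorting bags of size $\BigO{\tw}$, and Stage~0 touches $\BigO{\max(|V(T)|,|V(G)|)}$ nodes, which a careful implementation executes within the claimed $\BigO{\tw^2\cdot\max(|V(T)|,|V(G)|)}$ time. I expect the only genuinely non-routine point to be the node-count bound: a direct construction bounds the size only by $\BigO{\tw\cdot|V(T)|}$, which is useless when the input decomposition is large, so the argument really hinges on Stage~0 together with its $\BigO{|V(G)|}$ bound (proved via \Cref{tw:coherence}) and on the estimate $|E(G)|=\BigO{\tw\cdot|V(G)|}$ for graphs of treewidth $\tw$; everything else is bookkeeping.
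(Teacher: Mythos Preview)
The paper does not prove this proposition at all: it is quoted verbatim as a known result from \cite{pc_book} (Lemma~7.4 there) and used as a black box. Your sketch is the standard textbook construction and is essentially correct; in particular, the key observation that Stage~0 is what makes the node bound depend on $|V(G)|$ rather than on the possibly much larger $|V(T)|$ is exactly the point. One small caveat: the proposition as stated here (and in \cite{pc_book}) concerns the basic nice tree decomposition with vertex-introduce, forget, and join nodes only, so your Stage~4 (edge-introduce nodes) is an addendum beyond what is being asserted --- it does no harm, and the paper's surrounding text does mention edge-introduce nodes, but strictly speaking it is not part of this proposition.
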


When $T$ is simply a path then $(T, \{\bt_t\}_{t \in V(T)})$ is called a {\em path decomposition} of the graph $G$. Since $T$ is a path, nodes of $T$ can be ordered as $(t_1, t_2, \ldots, t_r)$ such that $t_{i}t_{i+1} \in E(T)$ for each $i \in [r-1]$. Thus the path decomposition of a graph is also represented as $(\mathcal{P}, (X_1,X_2, \ldots, X_r))$ where $X_i \subseteq V$. The minimum width of a path decomposition of a graph $G$ is called the {\em pathwidth} of $G$. Next, we define the {\em nice path decomposition} of an interval graph.

\begin{definition}
  A path decomposition $(\mathcal{P}, (X_1,X_2, \ldots, X_r))$ of a graph $G=(V,E)$ is called a nice path decomposition if $X_1 =X_r = \emptyset$ and the other nodes of $\mathcal{P}$ is identified as one of the following node: 
  \begin{enumerate}
      \item {\bf introduce node}: a node $t$ is an introduce node if it has exactly one child say $t'$ such that $X_t = X_{t'} \cup \{x\}$ for some vertex $x \notin X_{t'}$.
      \item {\bf forget node}: a node $t$ is an introduce node if it has exactly one child say $t'$ such that $X_t = X_{t'} \setminus \{x\}$ for some vertex $x \in X_{t'}$.
  \end{enumerate}
\end{definition}

\medskip
\noindent{\it Approximation Algorithm.} A fully polynomial-time approximation scheme {(\FPTAS)} is a family of algorithms
$\{A_{\epsilon} \}$, such that for each $\epsilon > 0$, there is a $(1 + \epsilon)$-approximation algorithm $A_{\epsilon}$ with running time bounded by a polynomial in $1/\epsilon$ and the size of input instsubsetsance. An \FPT approximation scheme (\FPT-\AS)\footnote{Not to be confused with a fully polynomial-time approximation scheme (\textup{FPTAS}).} is a \emph{fixed-parameter tractable approximation scheme} is an approximation algorithm that, given an instance $I$ of a maximization problem $\Pi$, an integer $k$, and any $\epsilon > 0$, returns a $\frac{1}{(1+\epsilon)}$-approximate solution in $f(k)\cdot |I|^{\BigO{1}}$ time, where $f$ is an
arbitrary computable function depending on the parameter $k$.

\medskip
\noindent{\it Parameterized complexity.} The goal of parameterized complexity is to find ways of solving \NPH problems more efficiently than exhaustive search: the aim is to
restrict the combinatorial explosion to a parameter that is likely to 
much smaller than the input size in families of input instances. Formally, a {\em parameterization}
of a problem is assigning an integer $k$  to each input instance of the problem.  We
say that a parameterized problem is {\em fixed-parameter tractable
  }(\FPT)  if there is an algorithm that solves the problem in time
$f(k)\cdot |I|^{\BigO{1}}$, where $|I|$ is the size of the input and $f$ is an
arbitrary computable function depending on the parameter $k$
only. A more general class of parameterized algorithm is the XP algorithms where a parameterized problem is {\em slicewise poly
  }(\XP)  if there is an algorithm that solves the problem in time
$|I|^{f(k)}$, where $|I|$ is the size of the input and $f$ is an
arbitrary computable function depending on the parameter $k$
only. Moreover, we will refer to such algorithms as an \FPT  (resp. \XP) algorithm and the problem to have an $\FPT(k)$ (resp. $\XP(k)$) algorithm. For more details on the subject, we refer to the textbooks~\cite{pc_book,fg,downey}.




\section{Our Results} 
\subsection{When the number of agents is two}\label{subsection:k_equals_2}


In this section, we study \pname when $k=2$. We show that \pname exhibits a dichotomy: when the \gname is \nc{disconnected} it is \NPH and it has a polynomial-time algorithm when it is connected. More interestingly, it is \FPT with respect to the number of components. Furthermore, due to \Cref{rem:hardness}, the problem is \NPH for $k=2$ and large profit values, however, in  \Cref{thm:k_is_2_graph_arbitrary}, we design a polynomial time algorithm when values are in unary. \hide{We begin our analysis with a result that shows that \pname is at least as hard as {\sc Partition} when the \gname has no edges. 

\begin{theorem}
\label{t:hard_2fpcg}\app
When $k=2$ and the \gname is edgeless, \pname is {\em NP}-complete. 
\end{theorem}


In Theorem~\ref{t:hard_2fpcg}, while constructing a corresponding instance \instancetwo of \pname, we can set $c_{i}=1$, $u_{i}=s_i$, $B = n$ and $P =\nicefrac{\sum\limits_{i \in [n]} s_{i}}{2}$ and argue similarly to show the hardness of \pname for these variants as well. 

\begin{cor} When $G$ is edgeless and at least one of $B$ or $P$ is $\poly{n}$, we have 
\begin{enumerate}
 \item[(a)] \pname is {\em NP}-complete; \hide{even when $B=\poly{n}$} and 
 \item[(b)] \pname is para NP-hard with respect to $k$. 
\end{enumerate}
\end{cor}}

We begin with the observation that for the case $k=2$ a solution for \pname is a solution for 2-{\sc Coloring} in the \gname. This, yields the following.

\begin{obs}\label{obs:k-is-2} 
When $k=2$, the \gname in a \yes-instance of \pname must be bipartite.
\end{obs}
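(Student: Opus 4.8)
The plan is to unwind the definitions: a \yes-instance of \pname with $k=2$ supplies a partition $(S_1,S_2)$ of the item set $V$ such that each $S_i$ is an independent set in the \gname $G$, together with the budget and utility conditions. For the purpose of this observation the budget and utility conditions are irrelevant; only the structural requirement that $S_1$ and $S_2$ are independent sets matters.

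First I would observe that the function $f:V(G)\to [2]$ defined by $f(v)=1$ if $v\in S_1$ and $f(v)=2$ if $v\in S_2$ is well defined precisely because $(S_1,S_2)$ is a partition of $V=V(G)$. Next, I would check that $f$ is a proper $2$-coloring of $G$: if some edge $uv\in E(G)$ had $f(u)=f(v)$, then both $u$ and $v$ would lie in the same part $S_i$, contradicting the fact that $S_i$ is an independent set in $G$. Hence $G$ admits a proper $2$-coloring.

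Finally I would invoke the standard fact that a graph is bipartite if and only if it admits a proper $2$-coloring (equivalently, it has no odd cycle), which gives that $G$ is bipartite. Conversely, and although not strictly needed for the statement, it is worth noting that this argument is tight in the sense that any $2$-coloring of $G$ yields an independent partition into two bundles; so the constraint $k=2$ essentially forces the $2$-colorability of the \gname, which is the source of the bipartiteness. There is no real obstacle here — the whole content is the one-line translation between ``independent partition into two parts'' and ``proper $2$-coloring'' — so the only care needed is to state the definitions cleanly and cite $2$-colorability $\Leftrightarrow$ bipartiteness.
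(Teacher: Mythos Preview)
Your argument is correct and matches the paper's own one-line justification: a solution for \pname with $k=2$ is precisely a proper $2$-coloring of the \gname, hence the graph is bipartite. There is nothing to add.
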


Thus, for the case $k=2$, the interesting case is when \gname is in fact bipartite.  
\nc{In light of Remark~\ref{rem:hardness}, we know that \pname is already known to be \NPH when the \gname is edgeless. By slightly tweaking the reduction from {\sc Partition}: by adding dummy items that create a matching with the ``real'' items, we can show that the hardness carries forward for the case when the \gname has edges but is still disconnected.}

\begin{theorem}\label{thm:disconnected-cg}
When $k=2$, \pname is \NPH when the conflict graph is disconnected.
\end{theorem}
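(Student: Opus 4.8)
The plan is to reduce from {\sc Partition}, adapting the edgeless reduction sketched in the introduction by padding the \gname with ``dummy'' items so that it becomes disconnected while still containing edges. Given a {\sc Partition} instance $S=\{s_1,\dots,s_n\}$, write $\sigma=\sum_{i\in[n]} s_i$; if $\sigma$ is odd or $n\le 1$ we decide the instance directly, so assume $\sigma$ is even and positive and $n\ge 2$. I would build an instance \instance of \pname as follows: the item set is $V=\{v_1,\dots,v_n\}\cup\{d_1,\dots,d_n\}$ where each $d_i$ is a fresh dummy, the \gname $G$ has edge set $\{v_id_i : i\in[n]\}$ (i.e.\ a perfect matching on $2n$ vertices, which for $n\ge 2$ is disconnected yet has edges), we take $k=2$, identical \utility function $p(v_i)=s_i$ and $p(d_i)=0$, identical cost function $c(u)=0$ for all $u\in V$, budget $B=0$, and target $P=\sigma/2$. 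This is computable in polynomial time.

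For correctness, I would first note that the budget constraint is vacuous (all costs and $B$ are $0$), so a feasible allocation is exactly a partition $(S_1,S_2)$ of $V$ into two independent sets of $G$ with $p(S_1),p(S_2)\ge P$. Since $v_id_i\in E(G)$ and $(S_1,S_2)$ is a partition, exactly one of $v_i,d_i$ lies in $S_1$ and the other in $S_2$; conversely, any assignment choosing for each $i$ which of the pair goes to $S_1$ yields two independent sets, as each contains at most one endpoint of every matching edge. Thus feasible allocations correspond bijectively to subsets $X\subseteq[n]$ via $S_1=\{v_i:i\in X\}\cup\{d_i:i\notin X\}$, and then $p(S_1)=\sum_{i\in X}s_i$ and $p(S_2)=\sigma-\sum_{i\in X}s_i$. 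Because $p(S_1)+p(S_2)=\sigma=2P$, the two lower bounds $p(S_1)\ge P$ and $p(S_2)\ge P$ hold simultaneously iff $\sum_{i\in X}s_i=P=\sigma/2$, i.e.\ iff $X$ is a solution to {\sc Partition}. Hence the constructed instance is a yes-instance of \pname iff $S$ is a yes-instance of {\sc Partition}, and \NPH of {\sc Partition} transfers.

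I do not anticipate a genuine technical obstacle here; the reduction is a light modification of the known one. The one point that needs care in writing it up is the argument that the matching edges \emph{force} $v_i$ and $d_i$ into opposite bundles, so that the achievable profit vectors are precisely those coming from a choice of $X$, and that with $P=\sigma/2$ the pair of lower-bound constraints collapses to the exact balance condition of {\sc Partition}. It is also worth explicitly observing that introducing the dummies neither creates spurious feasible allocations nor connects the graph, both of which are immediate since $G$ is just $n$ disjoint edges; and that since feasibility of a given allocation is checkable in polynomial time, one in fact obtains \NP-completeness for $k=2$ with a disconnected \gname.
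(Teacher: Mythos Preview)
Your proposal is correct and follows essentially the same approach as the paper: both reduce from {\sc Partition} by taking the \gname to be a perfect matching between the ``real'' items and dummies, forcing each matched pair into opposite bundles. The only (cosmetic) difference is that the paper encodes the partition values as \emph{costs} with budget $B=\sigma/2$ and a trivial profit threshold $P=1$, whereas you encode them as \emph{profits} with target $P=\sigma/2$ and trivial costs; the equivalence argument is the same.
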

\begin{proof}
    Clearly, given a two partition $S_1$ and $S_2$ of $V$, in $O(n^{2})$-time we can verify if $S_1$ and $S_2$ are independent. We can also verify if $c(S_i) \le B$ and $p(S_i) \ge P$ for each $i \in [2]$, in polynomial time. Thus the problem is in NP. To show the hardness of the problem, we describe a polynomial reduction from the {\sc Partition } problem which is known to be NP-complete. In the {\sc Partition } problem we are given a set $S = \{s_1,s_2, \ldots , s_n\}$ of integers and the goal is to partition $S$ into two sets $X$ and $Y$ such that $\sum\limits_{s_i \in X} s_i = \sum\limits_{s_j \in Y} s_j $. 
    
    Given an instance  $S = \{s_1,s_2, \ldots, s_n\}$ of the {\sc Partition} we construct an instance of \pname problem as follows: corresponding to each integer $s_i \in S$, we consider two item $u_i,v_i$. Next, we construct the \gname $G=(V,E)$ such that $V= \{u_i,v_i \mid [n]\}$ and $E= \{u_iv_i \mid i \in [n]\}$. We set set $c_{v_i} = s_{i}$ and $p_{v_i} = 1$ for all $i \in [n]$. Moreover, we set, $c_{u_i} = p_{u_i} = 0$ for each $i \in [n]$. Finally, we set $B= \frac{\sum\limits_{i\in [n]}s_i}{2}$ and $P=1$. Now, we prove the following claim:
\begin{claim}
\label{c:np_2}
$S$ can be partitioned  into two sets $X$ and $Y$ such that $\sum\limits_{s_i \in X} s_i = \sum\limits_{s_j \in Y} s_j$ if and only if the corresponding instance \instancetwo of \pname is a \yes~instance.
\end{claim}
\begin{proof}
    Suppose $S$ can be partitioned into two sets $X$ and $Y$ such that $\sum\limits_{s_i \in X} s_i = \sum\limits_{s_j \in Y} s_j$. Note that the sets $S_i = \{v_i \mid s_i \in X\} \cup \{u_j \mid s_j \notin X\}$ and $S_2 = \{v_i \mid s_i \in Y\} \cup \{u_j \mid s_j \notin Y\}$ forms a partition of $V$. Further, $c(S_1) = \sum\limits_{s_i \in X} s_i = \sum\limits_{s_i \in S} s_i/{2} = B$. Similarly, $c(S_2) = \sum\limits_{s_i \in Y} s_i = \sum\limits_{s_i \in S} s_i/2 = B$. Also, $p(S_1), p(S_2) \ge 1$. Thus, $(G,c,p,B,P)$ is a yes-instance of \FG.

    Conversely, suppose $(G,c,p,B,P)$ is a {\sc Yes}~instance of \FG. Then there is partition of $V$ into two sets $S_1$ and $S_2$ satisfying all the conditions. Let $X_{i} = \{s_j \mid v_j \in S_i\}$ for $i \in [2]$. It may be noted $2B = \sum\limits_{v_i \in V}c_{v_i}$ and $c_{v_i} = s_i$ for each $i \in [n]$. This implies that $c(S_1) =  c(S_2) = B$. Thus $\sum\limits_{s_i \in X_i}s_i = \sum\limits_{v_i \in S_i }c_{v_i} = B$ for each $i \in [2]$. This concluded the claim.
\end{proof}

The theorem follows from \Cref{c:np_2}.
\end{proof}


Notwithstanding this hardness, we can show that \pname does admit a parameterized algorithm with respect to the number of components. This allows us to infer that when \gname is connected, it is infact polynomial-time solvable.

\begin{theorem}
\label{thm:fpt_cc}
When $k=2$, \pname admits an $\FPT(r)$ algorithm, where $r$ denotes the number of components in the \gname.
\end{theorem}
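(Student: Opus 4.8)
The plan is to exploit the rigidity of $2$-colourings on connected graphs, which collapses the search space of \pname from $2^n$ candidate bipartitions to only $2^r$. First I would compute the connected components $C_1,\dots,C_r$ of the \gname $G$ in linear time. By \Cref{obs:k-is-2}, if some $C_j$ is not bipartite we immediately report a \no-instance. Otherwise each $C_j$ is connected and bipartite, so it has a \emph{unique} bipartition $(A_j,B_j)$ — fixing one vertex of $C_j$ and colouring the rest by parity of distance shows the proper $2$-colouring of a connected bipartite graph is unique up to swapping the two classes; for a single-vertex component $\{v\}$ we take $A_j=\{v\}$, $B_j=\emptyset$.

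The key observation is that in any feasible solution $(S_1,S_2)$ the restriction of $(S_1,S_2)$ to a component $C_j$ is a proper $2$-colouring of $C_j$ (since $S_1,S_2$ are independent, every edge of $G$ is split), and hence equals $(A_j,B_j)$ or $(B_j,A_j)$. So a feasible solution is fully encoded by a vector $x\in\{1,2\}^r$ telling, per component, which side goes into $S_1$, and there are only $2^r$ such vectors. I would therefore precompute $c(A_j),p(A_j),c(B_j),p(B_j)$ for all $j$ in $\BigO{n}$ total time, then enumerate all $x\in\{1,2\}^r$: for each $x$ put $S_1(x)=\bigcup_{j:x_j=1}A_j\cup\bigcup_{j:x_j=2}B_j$ and $S_2(x)=V\setminus S_1(x)$, and test $c(S_i(x))\le B$ and $p(S_i(x))\ge P$ for $i\in\{1,2\}$; using the precomputed per-component values each test is $\BigO{r}$. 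Output the partition for the first $x$ that passes all four tests, and report \no if none does.

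Correctness is then immediate in both directions. Any partition the algorithm outputs is feasible: each $S_i(x)$ is independent, being a union over components of one whole side of a bipartition with distinct components mutually non-adjacent, and the budget and utility constraints are checked explicitly. Conversely, given a feasible $(S_1^\star,S_2^\star)$, uniqueness of the component bipartitions yields an $x$ with $S_1(x)=S_1^\star$, and this $x$ passes every test, so the algorithm finds it. The running time is $\BigO{n}$ for components and bipartitions plus $\BigO{2^r r}$ for the enumeration, i.e.\ $\BigO{2^r n}$ overall; specialising to $r=1$ recovers a linear-time algorithm when the \gname is connected.

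There is no real obstacle here: the only delicate point is justifying uniqueness of the $2$-colouring of a connected bipartite graph together with the degenerate handling of isolated vertices — precisely the fact that turns an exponential-in-$n$ brute force into an $\FPT(r)$ algorithm.
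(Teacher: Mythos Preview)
Your proposal is correct and follows essentially the same approach as the paper: use the uniqueness (up to swapping sides) of the bipartition of each connected component to reduce the search to $2^r$ candidate assignments, then check the cost and utility constraints for each. Your write-up is in fact more explicit than the paper's, spelling out the precomputation of $c(A_j),p(A_j),c(B_j),p(B_j)$ and the $\BigO{r}$ per-candidate test, but the idea is identical.
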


\begin{proof}
Consider an instance \instancetwo of \pname. Due to Observation~\ref{obs:k-is-2}, if $G$ is not a bipartite graph then $\mathcal{I}$ is a \no-instance. Here, we assume that $G$ is a bipartite graph.  Let $C_1, C_2, \ldots, C_r$ denote the connected components of $G$. Each $C_i$ is an isolated vertex or a bipartite graph, say, with bipartition $(X_i, Y_i)$. Since $G[C_i]$ is connected, the bipartition of $C_i$ is unique up to renaming the partitions. Let $(S_1, S_2)$ be a partition of $V(G)$ satisfying \utility and cost constraints. Note that if a component $C_i$ is an isolated vertex then either it goes to $S_1$ or to $S_2$; otherwise, since $G[C_i]$ is a bipartite graph with bipartition $(X_i, Y_i)$, all the vertices in $X_i$ will be together in $S_1$ or in $S_2$. Moreover, if vertices of $X_i$ are in the set $S_1$ then vertices of the set $Y_i$ must be in the set $S_2$ and if vertices of $X_i$ are in the set $S_2$ then vertices of the set $Y_i$ must be in the set $S_1$. Thus, in $\BigO{2^{r}\cdot n}$ we can compute a required $2$-independent partition if one exists.
\end{proof}


\smallskip
Therefore, \Cref{thm:fpt_cc} yields the following.


\begin{cor}\label{t:poly_2fpcg}
When $k=2$ and the \gname is connected, \pname admits a polynomial-time algorithm. 
\end{cor}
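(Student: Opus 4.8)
The plan is to observe that this is the $r=1$ specialization of Theorem~\ref{thm:fpt_cc}. When the \gname $G$ is connected, it has exactly one connected component, so $r=1$, and the $\FPT(r)$ algorithm of Theorem~\ref{thm:fpt_cc} runs in time $\BigO{2^{r}\cdot n} = \BigO{2n} = \BigO{n}$, which is polynomial (indeed linear) in the input size.

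If a self-contained argument is preferred over merely citing Theorem~\ref{thm:fpt_cc}, I would unwind the $r=1$ case directly. First, test whether $G$ is bipartite in linear time; by Observation~\ref{obs:k-is-2}, if $G$ is not bipartite then $\mathcal{I}$ is a \no-instance and we report so. Otherwise, since $G$ is connected, its bipartition $(X,Y)$ is unique up to swapping the two sides, so any $2$-independent partition of $V(G)$ is either $(X,Y)$ or $(Y,X)$. It then suffices to check, for each of these two candidate partitions $(S_1,S_2)$, whether $c(S_i)\le B$ and $p(S_i)\ge P$ for $i\in\{1,2\}$; we answer \yes{} iff at least one candidate passes. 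Each check takes linear time, so the total running time is $\BigO{n}$.

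There is essentially no obstacle here: the corollary is an immediate consequence of Theorem~\ref{thm:fpt_cc} together with the fact that a connected graph has a single component. The only minor point worth stating explicitly is that connectedness forces the bipartition of a bipartite \gname to be unique (up to renaming the two parts), which is what collapses the $2^{r}$ factor to a constant.
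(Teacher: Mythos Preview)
Your proposal is correct and matches the paper's approach exactly: the paper derives this corollary directly from Theorem~\ref{thm:fpt_cc} (the $\FPT(r)$ algorithm), and the self-contained argument you spell out is precisely the $r=1$ unwinding of that proof, including the bipartiteness check and the uniqueness of the bipartition for a connected bipartite graph.
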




\hide{The case $k=2$ case exhibits a dichotomy as exhibited by~\Cref{rem:hardness} and the following result. }

\nc{Hence, we can conclude that \pname exhibits a dichotomy with respect to connectivity: it is ``hard'' when the graph is disconnected and ``easy" when it is connected.} {The next result shows that when the representation is in unary, we can solve \pname in polynomial-time. Alternately worded, this means that when $k=2$, \pname is pseudo-polynomial time solvable.} The following result is based on a path-style dynamic programming approach on a {\it layered graph}. We present the key idea and the formal analysis is in the Supplementary.


\begin{theorem}\label{thm:k_is_2_graph_arbitrary}
When $k=2$, \pname admits an algorithm with running time $\BigO{nP^2 B^2}$. That is, it has a polynomial-time algorithm when values are in unary.


\end{theorem}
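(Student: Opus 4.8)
The plan is to solve the $k=2$ case by dynamic programming over a linear ordering of the vertices, carrying along the partial costs and profits accumulated in each of the two bundles. Since the conflict graph $G$ must be bipartite (Observation~\ref{obs:k-is-2}) for a \yes-instance, I would first test bipartiteness and reject if it fails. The difficulty with a naive DP is the independence constraint: when we place a vertex in bundle $S_1$, we must know which of its neighbours already lie in $S_1$. To handle this I would process the vertices in an order compatible with a path decomposition (equivalently, for each connected component one may think of a BFS/DFS layering), and maintain at each step a ``frontier'' describing, for the currently active vertices, which side they were assigned to. This is where the word \emph{layered graph} in the statement comes in: we build an auxiliary graph whose nodes are states of the form (index in the ordering, assignment of the boundary vertices to $\{1,2\}$, $c(S_1)$ so far, $p(S_1)$ so far, $c(S_2)$ so far, $p(S_2)$ so far), and we seek a path through these layers that is consistent with the edges of $G$ and ends with all four aggregate quantities in the required ranges ($c(S_i)\le B$, $p(S_i)\ge P$).

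The key steps, in order, are: (i) check bipartiteness of $G$; (ii) fix a vertex ordering $v_1,\dots,v_n$ and, for each prefix, identify the constant-size set of ``boundary'' vertices (those with a neighbour not yet processed) — for the $k=2$ layered-graph formulation it suffices to track, per connected component, the side of the current reachable frontier, so the state needed is small; (iii) define the DP table $D[\,j\,;\,a\,;\,x_1,y_1,x_2,y_2\,]$ = true iff the first $j$ vertices can be 2-coloured consistently with $G$, with the boundary colouring $a$, and with $c(S_1)=x_1$, $p(S_1)=y_1$, $c(S_2)=x_2$, $p(S_2)=y_2$; (iv) fill the table by the two transitions ``put $v_{j+1}$ in $S_1$'' or ``put $v_{j+1}$ in $S_2$'', each allowed only if it respects the edges to already-placed neighbours, updating the aggregates additively; (v) answer \yes iff some reachable final state has $x_1,x_2\le B$ and $y_1,y_2\ge P$. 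Because costs are bounded by $\gamma$ and profits by $\alpha$, and we may prune $x_i>B$ and cap $y_i$ at $P$, the table has $\BigO{n\cdot P^2\cdot B^2}$ relevant entries (the $\BigO{1}$ boundary states being absorbed), and each is computed in $\BigO{1}$ time from its predecessors, giving the claimed $\BigO{nP^2B^2}$ bound; encoding the values in unary makes $P,B$ polynomial, hence pseudo-polynomiality.

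The main obstacle — and the point that needs the most care in the formal write-up — is correctly enforcing the independence (proper 2-colouring) constraint within a one-dimensional DP, i.e.\ making precise what the ``boundary'' state is and why it remains of bounded size as we sweep through the ordering. Concretely one must argue that for $k=2$, once a connected component has been entered, its 2-colouring is forced up to the global swap of the two colours, so the only real branching is ``which side does this component start on'', which is exactly the $\BigO{2^r}$ phenomenon of Theorem~\ref{thm:fpt_cc} but now threaded through the cost/profit DP; the layered-graph device repackages this so that the per-layer state stays $\BigO{1}$ and the running time picks up only the $P^2B^2$ factor from the four aggregate counters. Once that structural fact is nailed down, correctness of the transitions and the running-time accounting are routine, and the remaining formal details are deferred to the Supplementary as indicated.
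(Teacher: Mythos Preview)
Your proposal is correct and rests on the same structural fact the paper uses --- that in a connected bipartite component the $2$-colouring is unique up to a global swap, so the only real choice is per-component --- but your packaging is more roundabout than the paper's. The paper does not do a vertex-level DP with a ``boundary'' state at all: it first preprocesses each connected component $C_i$ by computing its bipartition $(X_i,Y_i)$ and collapsing each side into a single super-item with aggregated profit and cost, $p_{x_i}=p(X_i)$, $c_{x_i}=c(X_i)$, $p_{y_i}=p(Y_i)$, $c_{y_i}=c(Y_i)$. This reduces the problem to choosing, for each of the $r$ components, which of $\{x_i,y_i\}$ goes into $S_1$ (the other going into $S_2$), subject to the four aggregate bounds. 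The ``layered graph'' in the paper is literally a DAG with $r$ layers, layer $i$ containing just the two nodes $\{x_i,y_i\}$, and the DP table is $\m[j,\tau_1,\kappa_1,\tau_2,\kappa_2,z]$ over components $j\in[r]$ rather than over vertices.

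What you gain by following the paper's route is that the independence constraint is discharged \emph{entirely} in the preprocessing step (computing the bipartition of each component), so the DP never has to reason about edges of $G$ or carry any colouring state; the transition is a trivial ``put $x_j$ in $S_1$ and $y_j$ in $S_2$, or vice versa''. Your version works, but the path-decomposition/frontier language is unnecessary machinery here and the passage ``making precise what the boundary state is and why it remains of bounded size'' is exactly the wrinkle the paper sidesteps by collapsing components first. If you rewrite your argument to aggregate per component before running the DP, the ``main obstacle'' you flag disappears and the proof becomes a few lines.
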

\begin{proof}
Let \instance be an instance of \pname, where $k=2$. As observed, if $G$ is not a bipartite graph, then $\mathcal{I}$ is a \no-instance. Thus, we assume that $G$ is a bipartite graph. From \Cref{thm:fpt_cc}, we infer that if $G$ is connected, then \pname is polynomial-time solvable. Thus we assume that the graph $G$ is not connected. Let $C_1,C_2, \ldots, C_r$ be component of $G$. Hence, $G[C_i]$ is either an isolated vertex or a connected bipartite graph, denoted by $(X_i,Y_i)$. Let $(S_1,S_2)$ be a solution of the instance \Co{I}. Note that either all vertices of $X_i$ will be together in the set $S_1$ and thus, $Y_i$ will be in $S_2$, or it is the other way around.

Now, for each component $C_i$ we take two vertices $x_i$ and $y_i$. We define \utility on vertices $x_i$ and $y_i$ as $p_{x_i} = \sum_{v \in X_i}p_v$ and $p_{y_i} = \sum_{v \in Y_i}p_v$. Similarly, we define the cost of these vertices as $c_{x_i} = \sum_{v \in X_i}c_v$ and $c_{y_i} = \sum_{v \in Y_i}c_v$. For the component having an isolated vertex $v$, we define $p_{x_i} = p_v$, $c_{x_i} = c_v$, $p_{y_i} = 0$, and $c_{y_i} = 0$. Moreover, $c_s = p_s = 0$.

We define a directed graph $H$ and call it a layered graph with $r$ layers, where layer $i$ consists of vertices $\{x_i, y_i\}$. Each arc in $H$ is either colored red or blue. There are two arcs from $x_{i}$ to $x_{i+1}$; one is colored red, and the other is colored blue, for each $i \in [r-1]$.
Similarly, for each $i \in [r-1]$, there is a red arc as well as a blue arc from $y_{i}$ to $y_{i+1}$, $x_{i}$ to $y_{i+1}$, and $y_{i}$ to $x_{i+1}$.
Additionally, we create a special vertex $s$ and add both red and blue arcs from $s$ to $x_{1}$ and $s$ to $y_{1}$. For a directed path $P$ in $H$, we define \utility of $P$, denoted as $p(P) = \sum_{u\in V(P)} p_{u}$,  and cost of $P$, denoted as $c(P)$, to be $\sum_{u\in V(P)} c_{u}$.

It is easy to check that finding a $2$-independent partition of $V(H)$ is equivalent to finding two $r$-length paths, $P_1$ and $P_2$ such that all arc on $P_1$ are red (call it red path), all arcs on $P_2$ are blue (call it blue path) and $V(P_1) \cap V(P_2) = \{s\}$ ($P_1$ and $P_2$ do not share a vertex other than $s$).
Therefore, to check if $\mathcal{I}$ is a \yes-instance, our objective is to check if there exist $r$-length vertex disjoint red and blue paths starting from $s$ whose total utility is at least $P$ and cost is at most $B$. Here it is important to note that if $P_1$ and $P_2$ are $j$-length red and blue paths starting from $s$ such that they do not share a vertex other than $s$ then the end vertices of $P_1$ and $P_2$ are in set $\{x_j,y_j\}$. Moreover since they do not share a vertex other than $s$, if $P_1$ ends at $x_j$ ($y_j$) then $P_2$ ends at $y_j$ ($x_j$). Now, to achieve our goal, we define a dynamic programming based algorithm. 



For $j\in [r]$, positive integers $\profit_1, \profit_2 \leq P$ and $\cost_1, \cost_2 \leq B$, and $z \in \{1,2\}$ we define a Boolean valued function $\m[j,\profit_1,\cost_1, \profit_2, \cost_2,z]$ such that the function $\m[j,\profit_1,\cost_1, \profit_2, \cost_2,1] = 1$ if there are $j$-length red path and blue path $P_1$ and $P_2$ starting from $s$, satisfying $p(P_1) \ge \profit_1$, $c(P_1) \le \cost_1$, $p(P_2) \ge \profit_2$, $c(P_2) \le \cost_2$ and $P_1$ is ending at~$x_j$.

Similarly, $\m[j,\profit_1,\cost_1, \profit_2, \cost_2,2] = 1$ if there are $j$-length red and blue path $P_1$ and $P_2$ starting from $s$, satisfying $p(P_1) \ge \profit_1$, $c(P_1) \le \cost_1$, $p(P_2) \ge \profit_2$, $c(P_2) \le \cost_2$, and and $P_2$ is ending at $x_j$.

We define the following base cases : 
\begin{itemize}
\item $\m[1,\profit_1,\cost_1, \profit_2, \cost_2, 1]=1$ if $\profit_1 \le p_{x_1}$, $\cost_1 \ge c_{x_1}$, $\profit_2 \le p_{y_1}$, and $\cost_2 \ge c_{y_1}$; and $0$, otherwise. 

\item $\m[1,\profit_1,\cost_1, \profit_2, \cost_2, 2]=1$ if $\profit_1 \le p_{y_1}$, $\cost_1 \ge c_{y_1}$, $\profit_2 \le p_{x_1}$, and $\cost_2 \ge c_{x_1}$; and $0$, otherwise. 
\end{itemize}

Next, we define the following recurrence relation: 

\noindent
\fbox{
\parbox{0.45\textwidth}{
$\m[j,\profit_1,\cost_1, \profit_2, \cost_2,1] = \\ 
\lor_{z \in \{1,2\}} \m[j-1,\profit_1-p_{x_j},\cost_1-c_{x_j}, \profit_2 - p_{y_j}, \cost_2-c_{y_j},z]$,

\smallskip
and\\
$\m[j,\profit_1,\cost_1, \profit_2, \cost_2,2] = \\ 
\lor_{z \in \{1,2\}} \m[j-1,\profit_1-p_{y_j},\cost_1-c_{y_j}, \profit_2 - p_{x_j}, \cost_2-c_{x_j},z]$.
}}


\medskip
From the recursive definition, we infer that the instance $\mathcal{I}$ is a \yes-instance if and only if $\lor_{z \in \{1,2\}} \m[r,P,B,P, B,z]$ evaluates to $1$. Next we show the correctness and analyze the time complexity of the proposed algorithm. We denote a red path by subscript $1$ and a blue path with subscript $2$.

In the following discussions, we will treat paths that only intersect at $s$ to be vertex disjoint. We will say that entry $(j, \profit_1, \kappa_1, \profit_2, \kappa_2, z)$ is {\it computed correctly} if the following conditions holds: \\$\m[j, \profit_1, \kappa_1, \profit_2, \kappa_2, z]=1$ if there is $j$-length red path, denoted by $P_1$, and a $j$-length blue path, denoted by $P_2$ that are vertex disjoint and $u(P_1) \geq \profit_1$, $u(P_2)\geq \profit_2$, and $c(P_1) \leq \cost_1$ and $c(P_2) \leq \cost_2$, and that $P_1$ ends at $x_j$ when $z=1$, and at $y_j$ when $z=2$. We will prove this by induction on the table entry.

The base case is given by $\m[1, p_{x_1},c_{x_1}, p_{x_2}, c_{x_2}, 1]$ and \\$\m[1, p_{x_1},c_{x_1}, p_{x_2}, c_{x_2}, 2]$. 
Suppose that the induction hypothesis holds for all entries up to $(j, \widehat{\profit_1}, \widehat{\kappa_1}, \widehat{\profit_2}, \widehat{\kappa_2}, 1)$. 
We will argue the inductive step for this entry. 
By the recursive definition, we have the value of the above entry given by
\[\lor_{z\in \{1,2\}}\m[j-1, \widehat{\profit_1}-p_{x_{j}}, \widehat{\kappa_1}-\! c_{x_{j}}, \widehat{\profit_2}-\! p_{y_{j}}, \widehat{\kappa_2} - \! c_{y_{j}}, z] \]

Suppose that 
 $\m[j, \widehat{\profit_1}, \widehat{\kappa_1}, \widehat{\profit_2}, \widehat{\kappa_2}, 1] =1$. Then, either 

 \begin{align}
\m[j\!-\!1, \widehat{\profit_1}\!-\!p_{x_{j}}, \widehat{\kappa_1}-c_{x_{j}}, \widehat{\profit_2}\!-\! p_{y_{j}}, \widehat{\kappa_2}\!-\!c_{y_{j}}, 1]=1 \label{first}      
 \end{align}
 or 
\begin{align}
\m[j\!-\!1, \widehat{\profit_1}\!-\!p_{x_{j}}, \widehat{\kappa_1}\!-\!c_{x_{j}}, \widehat{\profit_2}- p_{y_{j}}, \widehat{\kappa_2} \!-\! c_{y_{j}},2]=1 \label{second}
\end{align}

Since, the induction hypothesis applies to each of these entries, we will argue the case for \ref{first}, the argument for \ref{second} is symmetric. Suppose that \Cref{first} holds. Then, there exist $j$-length red and blue paths, denoted by $P_r$ and $P_b$ respectively, that are vertex disjoint such that $u(P_r) \geq \widehat{\profit_1}-p_{x_{j}}$ and $c(P_r) \leq \widehat{\cost_1}-c_{x_{j}}$ and $P_r$ ends at $x_{j}$. Moreover, $u(P_b) \geq \widehat{\profit_2}-p_{y_{j}}$ and $c(P_b) \leq \widehat{\cost_2}-c_{y_{j}}$. 

Hence, we note that the $(j+1)$-length paths $P_1=[P_r, x_{j}]$ and $P_2=[P_b, y_{j}]$ are red and blue paths that are vertex disjoint and $u(P_1) \geq \widehat{\profit_1}-p_{x_{j}} + p_{x_{j}}= \profit_1$ and $u(P_2) \geq \widehat{\profit_2}-p_{y_{j}} + p_{y_{j}} = \profit_2$; and $c(P_1) \leq \widehat{\cost_1}-c_{x_{j}} + c_{x_{j}}$ and $c(P_2) \leq \widehat{\cost_2}-c_{y_{j}} + c_{y_{j}}$. Moreover, $P_1$ ends at $x_{j}$ and $P_2$ at $y_{j}$. Hence, for this case the entry $(j, \widehat{\profit_1}, \widehat{\kappa_1}, \widehat{\profit_2}, \widehat{\kappa_2}, 1)$
is computed correctly. The argument for case given by ~\Cref{second} holds with symmetry. Therefore, we can conclude that the inductive step has been proved, and thus {\it all} entries are computed correctly. Consequently, $\lor_{z\in \{1,2\}}\m[r,P, B, P, B,z]$ gives the correct answer, as well. 

For the converse, we note that if $\lor_{z\in \{1,2\}}\m[r,P, B, P, B,z]=1$, then by backtracking as above, (given that each entry has been computed correctly) we can construct a solution for the instance explicitly. Clearly, this procedure can be executed in time $\BigO{r,(PB)^2}$ in addition to the polynomial processing required to create the bipartitions in each of the components. This concludes the proof of this theorem.
\end{proof}

\subsection{When the number of agents is arbitrary}

As discussed in the Introduction, interval graphs are a natural setting to study {\sc Job Scheduling}, a problem that is a special case of \pname. In this section we present a result, \Cref{cor:chordal}, which implies that when the \gname is an interval graph, \pname has pseudo-polynomial time algorithm. Specifically, we show that chordal graphs have this property and the result is due to a simple reduction to the problem when \gname has bounded treewidth, for which we show that \pname admits an \FPT algorithm with respect to $k+\tw$ and is pseudo-polynomial on $\alpha$ and $\gamma$.


\begin{theorem}\label{thm:ppoly-tw} 
\pname admits an $\FPT(k+\tw)$ algorithm when the values are encoded in unary. 

\hide{\il{old version}
When $k$ is a constant and the \gname has constant treewidth, \pname admits a pseudo-polynomial time algorithm. 

\il{Aliter: 1st version}
\pname admits an algorithm with running time $f(k, \tw){\rm poly}(P,B)$. 
}

\end{theorem}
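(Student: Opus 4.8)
The plan is a dynamic-programming algorithm over a nice tree decomposition of the \gname $G$, extending the textbook treewidth DP for $k$-{\sc Coloring} by additionally bookkeeping, for every agent, the utility and cost already committed in the processed portion of $G$. First I would obtain, in $\FPT(\tw)$ time~\cite{pc_book} (a standard treewidth algorithm followed by \Cref{t:nice_tree_time}), a nice tree decomposition $(T,\{\bt_t\}_{t\in V(T)})$ of $G$ of width $\tw$ with edge-introduce nodes and $\BigO{\tw\cdot n}$ nodes; recall that then the subgraph $G_t$ associated to a node $t$ carries exactly the edges introduced within $T_t$, so that $G_r=G$ at the root $r$, where $\bt_r=\emptyset$.

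For the table, fix a node $t$, a function $f\colon\bt_t\to[k]$ that is a proper coloring of $G[\bt_t]$ on the edges introduced so far, and vectors $\vec{\profit}=(\profit_1,\dots,\profit_k)$ with $\profit_i\in\{0,\dots,\alpha\}$ and $\vec{\cost}=(\cost_1,\dots,\cost_k)$ with $\cost_i\in\{0,\dots,\gamma\}$. Set $\m[t,f,\vec{\profit},\vec{\cost}]=1$ iff there is a $k$-coloring $g\colon V_t\to[k]$ with $\restr{g}{\bt_t}=f$, proper on $G_t$, and such that for every $i\in[k]$ we have $\sum_{v\in V_t,\,g(v)=i}p_v=\profit_i$ and $\sum_{v\in V_t,\,g(v)=i}c_v=\cost_i$. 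Then $\mathcal I$ is a \yes-instance precisely when some entry $\m[r,\emptyset,\vec{\profit},\vec{\cost}]$ with $\profit_i\ge P$ and $\cost_i\le B$ for all $i$ equals $1$; in that case the bundles are recovered by backtracking through the states.

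The transitions are the expected ones: a leaf has only the all-zero state set to $1$; an introduce-vertex node assigning $v$ colour $c$ replaces $\profit_c$ by $\profit_c+p_v$ and $\cost_c$ by $\cost_c+c_v$, leaving the other coordinates (and the rest of $f$) unchanged; an introduce-edge node for $uv$ keeps a state iff its coloring satisfies $f(u)\ne f(v)$; a forget node takes the disjunction over the colour of the forgotten vertex; and at a join node with children $t_1,t_2$ one insists on a common bag colouring $f$ and combines the profiles by inclusion–exclusion, $\profit_i=\profit_i^{(1)}+\profit_i^{(2)}-\sum_{v\in\bt_t,\,f(v)=i}p_v$ and analogously for $\cost_i$, so the bag vertices are not counted twice. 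Correctness is a bottom-up induction on $T$; the one delicate point is the join node, where I would invoke \Cref{thm:nice_tree_prop} to argue that the two sides of the decomposition are non-adjacent outside the shared bag, hence a globally proper colouring of $G_t$ with a prescribed per-agent utility/cost profile exists iff mutually compatible colourings with complementary profiles exist on both sides — and then check that the inclusion–exclusion correction is simultaneously correct for utilities, for costs, and for all $k$ agents. This join step is also the running-time bottleneck, since for each bag colouring one ranges over all pairs of the two children's profile vectors, a $((\alpha{+}1)(\gamma{+}1))^{2k}$ factor; summed over the $\BigO{\tw\cdot n}$ nodes this yields the bound in \Cref{table:results}. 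Because the values are encoded in unary, $\alpha=\sum_v p_v$ and $\gamma=\sum_v c_v$ are polynomial in the input size, so the total running time is $f(k,\tw)\cdot n^{\BigO{1}}$, establishing the $\FPT(k+\tw)$ claim. The main obstacle, as indicated, is getting the join node exactly right: correctly merging colourings across a bag while adjusting the $2k$ numeric coordinates without double-counting, and justifying that the naive product over the children's profile vectors is both sound and complete.
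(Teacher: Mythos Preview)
Your proposal is correct and follows essentially the same approach as the paper: a dynamic program over a nice tree decomposition that extends the standard $k$-{\sc Coloring} DP by tracking, per agent, the accumulated utility and cost, with the join node handled by inclusion--exclusion on the bag contribution and justified via \Cref{thm:nice_tree_prop}. The only cosmetic differences are that the paper stores, for each $(t,f)$, the \emph{set} $\pc[t,f]$ of achievable PC-vectors rather than your boolean indicator $\m[t,f,\vec{\profit},\vec{\cost}]$, and it does not separate out edge-introduce nodes (properness is enforced by requiring $f$ to be a coloring of $G[\bt_t]$); the running-time analysis and the $(\alpha\gamma)^{2k}$ join bottleneck are identical.
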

\begin{proof}
Let \instance be an instance of \pname, where $k$ is constant and let $(T, \{\bt_{t}\}_{t \in V(T)})$ be a tree decomposition of \gname $G$ with treewidth $\tw$. We propose a dynamic programming based algorithm. Towards this end, we define the {\em PC-value} of a $k$-partition $(X_1, X_2, \ldots, X_k)$ of a set $S \sse V(G)$ to be a vector $((P_1,C_1), \ldots, (P_k,C_k))$ where for each $i \in [k]$,
$P_i = \sum\limits_{v \in X_i}p_v$ and $C_i = \sum\limits_{v \in X_i}c_v$. In our algorithm, we compute all possible PC-values that can be obtained from a $k$-independent partition of $G$. To design the dynamic programming algorithm, we consider the nice tree decomposition of $G$. Let $(T,\{\beta_t\}_{t \in V(T)})$ be a nice tree decomposition of $G$ of width $\tw$. Recall that, in nice tree decomposition, each node of the tree $T$ can be identified as a {\em leaf node, root node, introduce node, a forget node, or a join node}. 
\nc{For a node $t \in V(T)$, let $T_{t}$ denotes the subtree of $T$ rooted at $t$ and $V_{t} = \bigcup_{t \in V(T_t)}\bt_t$.} 
We traverse tree $T$  bottom-up and use a dynamic programming approach to compute a ``partial solution" for every node $t \in V(T)$ and every $k$-coloring $f$ of $G[\beta_t ]$.  For a node $t \in V(T)$ and a $k$-coloring $f$ of $G[\beta_t ]$,  the idea is to store the set of all PC-values, denoted by $\pc[t,f]$, that can be obtained from a $k$-coloring of $G[V_t]$ that agrees with $f$ on $\bt_{t}$.

Note that for the root $r$, bag $\beta_r$ is empty. Thus, every color class in a $k$-coloring of $\beta_r$ is empty, as well. Since $G[V_r] = G$ and any $k$-coloring of $G[V_r]$ agrees with the coloring $f$ of $\beta_r$, we can conclude that $\pc[r,f]$ contains all possible PC-values  which can be obtained from a $k$-coloring of $V(G)$.   Now, we compute $\pc[t,f]$ depending on the node type $t$.
\begin{enumerate}
    \item {\bf $t$ is the leaf (first) node:} \hide{Since we are considering a nice path decomposition,}Since $X_t = \emptyset$, each color  class in a $k$ coloring of $X_{t}$ is an empty set. Hence, \[\pc[t,f] = \{ ((0,0), (0,0), \ldots (0,0))\}.\]

    \smallskip
    \item {\bf $t$ is an introduce node:} By definition,
    $t$ has exactly one child, say $t'$, such that $\bt_{t} = \bt_{t'} \cup \{v\}$ for some vertex $v \in V \setminus \bt_{t'}$. 
    We compute $\pc[t,f]$ for a $k$-coloring $f$ of $\bt_{t}$ assuming that we have computed the values of $\pc[t',f']$ for all possible $k$-coloring $f'$ of $\bt_{t'}$. Let $f$ be a $k$-coloring of $\bt_{t}$ and $f' = f_{|_{\beta_t'}}$. 
    We define,
    \begin{align*}
    \label{pp_int_intro}
     \pc[t,f] = \{ \mathrm{b} + \Delta_i(p_v,c_v) \mid \!\mathrm{b} \in \pc[t',f'], i=f(v)\}, 
    \end{align*}
    where $\Delta_i(p_v,c_v)$ is a $k$-length vector whose $i$th entry is $(p_v,c_v)$ and all other entries are $(0,0)$. 


\hide{As I have already said the following should be part of the correctness argument.}
   \hide{ Now we show the correctness of the recurrence for introduce node. Consider a coloring $f$ of $\bt_{t}$ and let the introduced vertex $v$ is colored $i$, that is, $f(v)=i$. Clearly, $v$ is not adjacent to any vertex $u \in V_{t} \setminus \bt_{t}$. Thus any $k$ coloring $h$ of $V_t$ that agrees with the coloring $f$ of $\bt_{t}$, also agrees with the coloring $f'$ of $\bt_t$, where $f'$ is a restriction of $f$ on $\bt_{t'}$. Now, let $f'$ be a $k$-coloring of $\bt_{t'}$ such that $f'^{-1} \cap \bt_{t'} = \emptyset$. In this case, we note that any $k$-coloring $h$ of $V_t$ that agrees with $f'$ on $\bt_{t'}$ also agrees with the $k$ coloring $f$ of $\bt_{t}$ where $f(u)=f'(u)$ for all $u \in \bt_{t} \setminus \{v\}$ and $f(v) = i$. This proves a one-to-one relation between the set of $k$ coloring $h$ of $G[V_t]$ that agrees with $k$ coloring $f$ of $\bt_{t}$ and set of $k$ coloring $h$ of $G[V_{t'}]$ that agrees with $k$ coloring $f$ on $\bt_{t'}$, where $f'$ is a restriction of $f$. This proves the correctness of the recurrence. }
     
\item {\bf $t$ is a forget node:} $t$ has exactly one child $t'$, such that $\bt_{t} = \bt_{t'} \setminus \{v\}$ for some vertex $v \in \bt_{t'}$. We say a coloring $f$ of $\bt_{t}$ can be extendable to a coloring $f'$ of $\bt_{t'}$ if $f'(u) = f(u)$ for all $u \in \bt_{t}$ and $f'(v) = i$ if $f^{-1}(i) \cap N_{G}[v] = \emptyset$. For a forget node $t$, we define
\[
     \pc[t,f] = \bigcup_{f \text{ is extendable to } f' }\pc[t',f'] .
\]
\item{\bf $t$ is join node:} By definition of join node, $t$ has exactly two children $t_1$ and $t_2$ such that $\bt_{t} = \bt_{t_1} = \bt_{t_2}$. For a coloring $f$ of $\bt_{t}$, let $ (A_1,A_2, \ldots, A_k)= (f^{-1}(1), f^{-1}(2), \ldots, f^{-1}(k))$ and denote $\Tilde{b} = ((p(A_1), c(A_1)),(p(A_2), c(A_2)), \ldots, (p(A_k), c(A_k)))$. For a join node $t$, we define,
    \[
     \pc[t,f] = \{\mathrm{b}_{1} + \mathrm{b}_{2} - \Tilde{b} \mid \mathrm{b}_{1} \in \pc[t_1,f], \mathrm{b}_{2} \in \pc[t_2,f]\} .
    \]
\end{enumerate}

Finally, we solve the problem by checking if there is an element $((P_1,C_1), \ldots ,(P_k,C_k)) \in \pc[t,r]$ such that $P_i \ge P$ and $C_i \le B$ for each $i \in [k]$. If such an element exists in $\pc[t,r]$ then we conclude that $\mathcal{I}$ is a \yes-instance of \pname. The corresponding partition can be obtained by backtracking over the subproblems. 

\medskip
\noindent{\it Correctness.} We say that the entry $C[t,f]$ is computed {\it correctly} if it contains the set of all possible PC-values which can be obtained from a $k$-coloring of $G[V_t]$ that agrees with the $k$-coloring $f$ on $\beta_t$. We begin our analysis by observing that the correctness of the recurrence for a leaf node and the root follows trivially from the definition. Hence, our main analysis will focus on the introduce, forget, and join nodes. 

\smallskip
\noindent {\em When $t$ is a introduce node}: For an introduce node $t$, we show that $\pc[t,f] = \{ \mathrm{b} + \Delta_i(p_v,c_v) \mid \mathrm{b} \in \pc[t',f'] \}$. Note that $t$ has exactly one child $t'$ such that $\bt_{t} =  \bt_{t'} \cup \{v\}$, where $v \notin V_{t'}$. Consider a coloring $f$ of $\bt_{t}$ and let the introduced vertex $v$ be colored $i$, that is, $f(v)=i$. Clearly, 
any $k$ coloring $h$ of $V_t$ that agrees with the coloring $f$ of $\bt_{t}$, also agrees with the coloring $f'$ of $\bt_t$, where $f' = \restr{f}{\bt_{t'}}$ ($f'(v) = f(v)$ for each $v \in \bt_{t'}$).  
\begin{align*}
     \pc[t,f] \subseteq \{ \mathrm{b} + \Delta_i(p_v,c_v) \mid \!\mathrm{b} \in \pc[t',f'], i=f(v)\}.
\end{align*}
Now, let $f'$ be a $k$-coloring of $\bt_{t'}$ such that $f^{'-1}(i) \cap N_{G}(v) = \emptyset$. That is, no neighbor of $v$ is colored $i$ by $f'$. Also, using Theorem~\ref{thm:nice_tree_prop}, we note that $v$ is not adjacent to any vertex $u \in V_{t} \setminus \bt_{t}$. Therefore, in this case, any $k$-coloring $h$ of $V_{t'}$ that agrees with $f'$ on $\bt_{t'}$, also agrees with the $k$ coloring $f$ of $\bt_{t}$ where $f(u)=f'(u)$ for all $u \in \bt_{t} \setminus \{v\}$ and $f(v) = i$. 
\begin{align*}
      \{ \mathrm{b} + \Delta_i(p_v,c_v) \mid \!\mathrm{b} \in \pc[t',f'], i=f(v)\} \subseteq \pc[t,f]. 
\end{align*}
This proves that $C[t,f]$ has been computed correctly for an introduce node. 
Observe that, here we proved a one-to-one relation between the set of $k$ coloring $h$ of $G[V_t]$ that agrees with $k$ coloring $f$ of $\bt_{t}$ and set of $k$ coloring $h'$ of $G[V_{t'}]$ that agrees with $k$ coloring $f'$ on $\bt_{t'}$, where $f'= \restr{f}{\bt_{t'}}$. 

\smallskip
\noindent {\em When $t$ is a forget node}: Here $t$ has exactly one child $t'$ such that the corresponding bags satisfy $\bt_{t} = \bt_{t} \setminus \{v\}$ for some $v \in V$. Thus, $\bt_{t} \subseteq \bt_{t'}$ and $V_{t'} = V_{t}$. We say that a coloring $f$ of $\bt_{t}$ is  extendable to a coloring $f'$ of $\bt_{t'}$ if $f'(u) = f(u)$ for each $u \in \bt_{t}$ and $f'(v) =i$, where $i$ is a color class such that $f^{-1}(i) \cap N_{G}(v) = \emptyset$.

Now, consider an arbitrary $k$ coloring $h$ of $G[V_t]$ that agrees with a $k$ coloring $f$ of vertices in the bag $\bt_{t}$. Let $h(v) = i$ for some $i \in [k]$. It can be observed that the coloring $h$ agrees with all colorings $f'$ such that $f$ is extendable to $f'$. Thus we have,
    \[
     \pc[t,f] \subseteq \bigcup_{f \text{ is extendable to } f' }\pc[t',f'] .
    \]
Furthermore, if $h'$ is a coloring of $V_{t'}$ that agrees with a coloring $f'$ on $\bt_{t'}$ such that $f'(v) = i$ then $h'$ also agrees with the coloring $f$ on $\bt_{t}$ such that $f$ does not color any of the neighbour of $v$ with color $i$. Precisely, $h'$ also agrees with the coloring $f$ on $\bt_{t}$ such that $f^{-1}(i) \cap N_{G}(v) = \emptyset$. Therefore we have,
    \[
     \bigcup_{f \text{ is extendable to } f' }\pc[t',f'] \subseteq \pc[t,f].
    \]
This shows that we correctly compute all the subproblems at a forget node.

 

\smallskip
\noindent {\em When $t$ is a join node}: Recall that, by definition $t$ has exactly two children $t_1$ and $t_2$ such that $\bt_{t} = \bt_{t_1} = \bt_{t_2}$. Since $\bt_{t} = \bt_{t_1} = \bt_{t_2}$, $\bt_{t_1} \subseteq V_{t_1}$ and $\bt_{t_2} \subseteq V_{t_2}$, we have $\bt_{t} \subseteq V_{t_1} \cap V_{t_2}$. Also, using the property of tree decomposition, we note that no vertex of $V_{t_1} \setminus \bt_{t_1}$ is adjacent to a vertex $V_{t_2} \setminus \bt_{t_2}$. 

Let $h$ be a coloring of $V_t$ that agrees with a coloring $f$ on $\bt_t$. Then it is easy to see that the coloring $h_1$, a restriction of $h$ to $V_{t_1}$, agrees with $f$ on $\bt_{t_1}$. Similarly, the coloring $h_2$ that is exactly $h$ restricted to $V_{t_2}$, agrees with $f$ on $\bt_{t_2}$.

Furthermore, we also note that if $h_1$ is a coloring of $V_{t_1}$ that agrees with a coloring $f$ on $\bt_{t_1}$ and $h_2$ is a coloring of $V_{t_2}$ that  agrees with the coloring $f$ on $\bt_{t_2}$ then the coloring $h$ defined as, 

        \begin{equation*}
            h(v)= 
            \begin{cases} 
                 h_{1}(v),  & \text{ if } v \in V_{t_1} \setminus \beta_{t}, \\
                 h_{2}(v),  & \text{ otherwise},
            \end{cases}
        \end{equation*}
   
agrees with coloring $f$ on $\bt_{t}$. This is because the induced subgraphs $G[\bt_{t_1}]$, $G[\bt_{t_2}]$, and $G[\bt_{t}]$ are exactly same. Note that the vertices in $\bt_{t}$ are counted in the partition created due to coloring $h_1$ as well as coloring $h_2$. Thus the \utility and cost of vertices in $\bt_{t}$ are added twice. This concludes the correctness of the proposed recurrence for the join node. This concludes the proof of the theorem.


\medskip
\noindent{\it Time complexity:} Suppose the treewidth of the \gname $G$ in the input instance is $\tw$. Then there are at most~$\BigO{\tw \cdot n}$ nodes in nice tree decomposition. At each node $t$, we are considering all possible colorings of $\bt_{t}$. Thus, there are $k^{|\bt_{t}|}$ possible $k$-colorings of $\bt_{t}$. Furthermore, we observe that for any partition $(X_1,X_2, \ldots, X_{k})$ of $V_{t}$, we have, $(p(X_i), c(X_i)) \in \{0,1, \ldots, \alpha\} \times \{0,1, \ldots, \gamma\}$.  Therefore, each set $\pc[t,f]$ has at most $(\alpha \cdot \gamma)^{k}$ entries. The adjacency checking can be done in constant time. The computation time of $\pc[t,f]$ at each node depends on the type of node. For leaf nodes the $\pc[t,f]$ can be computed in constant time. At introduce node $t$, the $\pc[t,f]$ is computed using already computed values of corresponding $\pc[t',f']$ which takes time at most $(\alpha \cdot \gamma)^{k}$. For computing $\pc[t,f]$ when $t$ is a forget node, in $\BigO{k}$-time, we first check for all the color class which does not contain any neighbor of $v$. Then we can compute $\pc[t,f]$ in time $\BigO{k \cdot (\alpha \cdot \gamma)^{k}}$. At the join node for each entry of $\pc[t_1,f]$, we may have to go through all values in the set $\pc[t_2,f]$. Therefore, at join node $t$, $\pc[t,f]$ can be computed in time $\BigO{(\alpha \cdot \gamma)^{2k}}$. Hence, the total running time of the algorithm is $\BigO{\tw^{k+1} \cdot ( \alpha \cdot \gamma)^{2k} \cdot n}$, where $n$ denotes the number of vertices in the \gname.

This concludes that the \pname admits a pseudo-polynomial time algorithm when $k$ is constant and the \gname in the input instance has bounded treewidth. 
\end{proof}

Here are some nice observations that we can infer from \Cref{thm:ppoly-tw}. Let \instance be an instance of \gname, where $k= \BigO{1}$ and $G$ is a chordal graph.  Let $(T, \{\bt_{t}\}_{t \in V(T)})$ be a tree decomposition of $G$. It is known that a graph $G$ is a chordal graph if and only if it admits a tree decomposition where each bag induces a maximal clique in $G$, see \cite{chordal}. Such tree a decomposition of a chordal graph $G$ can be computed in linear-time~\cite{chordal_tw_linear}. Using these facts and Observation~\ref{obs:bounded-clique-size}, we note that if the treewidth of $G$ in $\mathcal{I}$ is more than $k-1$, then $\mathcal{I}$ is a \no-instance of \pname. Further, if the treewidth of $G$ is less than $k$, then we can use the algorithm given in Theorem~\ref{thm:ppoly-tw} to decide if $\mathcal{I}$ is a \yes-instance of \pname. Moreover, when $G$ is an interval graph (a subclass of chordal graph) the corresponding tree $T$ in the tree decomposition of $G$ is a path. Thus we do not have join node in case of interval graphs. From the discussion above, we have the following corollary.

\begin{cor}
\label{cor:chordal}
Let \instance be an instance of \pname. When $k$ is a constant and the \utility and the cost values are in unary,  we have the following:
\begin{enumerate}[label=(\alph*)]
\item if $G$ has constant treewidth, \pname admits a polynomial time algorithm;
\item if $G$ is chordal then \pname admits a polynomial time algorithm;
\end{enumerate}
\end{cor}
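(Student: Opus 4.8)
The plan is to obtain both parts as direct corollaries of Theorem~\ref{thm:ppoly-tw}. Recall that that algorithm runs in time $\BigO{\tw^{k+1}\cdot(\alpha\gamma)^{2k}\cdot n}$, where $\alpha=\sum_{v\in V}p(v)$ and $\gamma=\sum_{v\in V}c(v)$. When the \utility and cost values are encoded in unary, $\alpha$ and $\gamma$ are bounded by a polynomial in the size of the input, so the factor $(\alpha\gamma)^{2k}$ is polynomial once $k=\BigO{1}$; and once $\tw=\BigO{1}$ the factor $\tw^{k+1}$ is a constant. Hence, in both settings below, the only thing to verify is (i) that the conflict graph $G$ really does have bounded treewidth, and (ii) that we can actually hand Theorem~\ref{thm:ppoly-tw} a tree decomposition of that width.

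\emph{Part (a).} Here $\tw(G)=\BigO{1}$ is assumed. Since $\tw(G)$ is a fixed constant, a tree decomposition of $G$ of width at most $\tw(G)$ can be computed in linear time (e.g., by Bodlaender's algorithm; see~\cite{pc_book}); the proof of Theorem~\ref{thm:ppoly-tw} then internally converts it to a nice tree decomposition of the same width via Proposition~\ref{t:nice_tree_time}. Feeding this to the algorithm of Theorem~\ref{thm:ppoly-tw} and using $k,\tw=\BigO{1}$ together with $\alpha,\gamma=\poly{n}$, the running time $\BigO{\tw^{k+1}(\alpha\gamma)^{2k}n}$ is polynomial in the input size. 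This proves~(a).

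\emph{Part (b).} I would reduce the chordal case to part~(a). A graph is chordal if and only if it admits a tree decomposition in which every bag induces a maximal clique (a clique tree), and such a decomposition can be computed in linear time~\cite{chordal,chordal_tw_linear}; in particular $\tw(G)=\omega(G)-1$, where $\omega(G)$ is the clique number. So: first compute a clique tree of $G$. If some bag has more than $k$ vertices, then $G$ contains a clique of size greater than $k$, hence $G$ is not $k$-colorable; but every solution of \pname partitions $V$ into $k$ independent sets, i.e.\ is a proper $k$-coloring of $G$ (Observation~\ref{obs:bounded-clique-size}), so in this case $\mathcal I$ is a \no-instance and we output ``no''. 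Otherwise every bag has at most $k$ vertices, so $\tw(G)\le k-1=\BigO{1}$, and we invoke part~(a) on the clique tree already in hand. (As an immediate specialization, if $G$ is moreover an interval graph its clique tree is a path, so there are no join nodes and the $\tw^{k+1}$ factor in the running time improves correspondingly, cf.~\Cref{cor:chordal}'s use in the text.)

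\emph{Main obstacle.} There is no real obstacle: the statement is essentially a repackaging of Theorem~\ref{thm:ppoly-tw}. The only points needing care are citing the right constructive facts (linear-time tree decompositions for bounded treewidth, and linear-time clique-tree construction for chordal graphs) and the short but essential observation that a clique of size $>k$ in $G$ certifies a \no-instance, which is precisely what caps the treewidth at $k-1$ in the chordal case and makes the $\FPT(k+\tw)$ bound of Theorem~\ref{thm:ppoly-tw} collapse to a polynomial bound parameterized only by the constant $k$.
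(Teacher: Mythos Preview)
Your proposal is correct and follows essentially the same route as the paper: invoke Theorem~\ref{thm:ppoly-tw} directly for part~(a), and for part~(b) use the clique-tree characterization of chordal graphs together with Observation~\ref{obs:bounded-clique-size} to either reject (if some bag exceeds $k$ vertices) or bound $\tw(G)\le k-1$ and reduce to part~(a). Your added remark that a bounded-width tree decomposition can be computed in linear time (Bodlaender) is a small but welcome clarification the paper leaves implicit.
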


\begin{theorem*}
When $k$ is a constant and the \gname is an interval graph,
\pname admits a polynomial time algorithm with running time $\BigO{k^{k} \cdot ( \alpha \cdot \gamma)^{k} \cdot n}$.
\end{theorem*}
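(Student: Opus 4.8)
The plan is to instantiate the dynamic programming algorithm of \Cref{thm:ppoly-tw} on a path decomposition tailored to interval graphs, and then tighten the running-time analysis using the structural restriction that the clique size of the \gname is bounded by $k$.

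First, I would invoke the classical fact that every interval graph $G$ admits a \emph{clique path decomposition} --- a path decomposition in which every bag induces a maximal clique of $G$ --- and that such a decomposition can be computed in linear time. Combining this with \Cref{obs:bounded-clique-size}, which asserts that every \yes-instance of \pname has clique size at most $k$: if $G$ contains a clique of size more than $k$ we immediately report \no; otherwise the clique path decomposition has width at most $k-1$, i.e.\ every bag has at most $k$ vertices. Converting it into a \emph{nice} path decomposition (introduce and forget nodes only) of the same width keeps the width $\le k-1$ and yields $\BigO{n}$ nodes, since along the path each vertex is introduced exactly once and forgotten exactly once.

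Next, I would run the dynamic program of \Cref{thm:ppoly-tw} verbatim on this nice path decomposition. The only structural difference is that a path decomposition has no join nodes, so the recurrences reduce to the leaf, introduce and forget cases; correctness then follows exactly as in the proof of \Cref{thm:ppoly-tw}, where the introduce and forget analyses are unchanged and the join analysis is vacuous. Each reachable entry $\pc[t,f]$ still records all PC-values attainable by a $k$-colouring of $G[V_t]$ agreeing with $f$ on $\bt_t$, and the instance is a \yes-instance iff some entry at the root contains a vector with $P_i \ge P$ and $C_i \le B$ for all $i \in [k]$; a witness partition is recovered by backtracking.

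Finally, I would redo the running-time count. There are $\BigO{n}$ bags, each of size at most $k$, hence at most $k^{k}$ colourings per bag; for each colouring the table $\pc[t,f]$ has at most $(\alpha\gamma)^{k}$ entries, because $(p(X_i),c(X_i)) \in \{0,\dots,\alpha\}\times\{0,\dots,\gamma\}$ for each colour class. An introduce node is processed in $\BigO{(\alpha\gamma)^{k}}$ time and a forget node in $\BigO{k\cdot(\alpha\gamma)^{k}}$ time; with no join nodes there is no $(\alpha\gamma)^{2k}$ factor. Multiplying through gives $\BigO{k^{k}\cdot(\alpha\gamma)^{k}\cdot n}$, as claimed. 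I expect the only genuinely delicate point to be the bookkeeping that the nice path decomposition has $\BigO{n}$ nodes (rather than $\BigO{k\cdot n}$) and that the bag-size bound is $k$ (so the colouring count is $k^{k}$, not $k^{k+1}$) --- both consequences of the clique-path structure together with \Cref{obs:bounded-clique-size}.
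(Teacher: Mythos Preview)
Your proposal is correct and follows essentially the same route as the paper: both arguments bound the pathwidth by $k-1$ via the clique-size observation, run the dynamic program of \Cref{thm:ppoly-tw} on a nice path decomposition (hence no join nodes and no $(\alpha\gamma)^{2k}$ factor), and arrive at the stated bound. Your treatment is, if anything, slightly more explicit about invoking the clique path decomposition of interval graphs and about the $\BigO{n}$ node count, but the underlying argument is the same.
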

 

\begin{proof}
Let \instance be an instance of \pname where $G$ is an interval graph and $k$ is constant. To design our dynamic programming algorithm, we first compute a the nice path decomposition $(\mathcal{P}, (X_1,X_2, \ldots, X_r))$ of the \gname, $G$. If the pathwidth of $G$ is more than $k-1$, finding a required $k$ partition of $V(G)$ is not possible. Therefore $\mathcal{I}$ is a \no-instance of pathwidth of $G$ is more than $k-1$. Thus, we assume that the pathwidth of $G$ is at most $k-1$.  Specifically, we may assume that the pathwidth of $G$ is~$t \le k-1$.

Recall that, in nice path decomposition, each node of the path $\mathcal{P}$ can be identified as a {\em leaf node, root node, introduce node, or a forget node}. We traverse path $\mathcal{P}$  from $X_1$ to $X_r$ and use a dynamic programming approach to compute a ``partial solution" for every node $t \in V(\mathcal{P})$ and every $k$-coloring $f$ of $G[X_t]$. Here, we denote, $V_{t} = \bigcup_{s \le t} X_{s}$. The PC-values and subproblem for each node $t$ of $\mathcal{P}$ and a $k$-coloring $f$ of the vertices in the bag $\beta_{t}$ are same as we defined in Theorem~\ref{thm:ppoly-tw}. Recall the definition of subproblems,

\smallskip
\noindent
\fbox{
\parbox{0.45\textwidth}{
$\pc[t,f]$ -- set of all possible PC-values which can be obtained from a $k$-coloring $h$ of $G[V_t]$ that agree with $f$ on~$\beta_t$. 
}}

\medskip
By definition, $\beta_r$ is empty. Thus, every color class in a $k$-coloring of $\beta_r$ is empty, as well. Since $G[V_r] = G$ and any $k$ coloring of $G[V_r]$ agrees with the coloring $f$ of $\beta_r$, we can conclude that $\pc[r,f]$ contains all possible PC-values  which can be obtained from a $k$-coloring of $V(G)$. \hide{Note that each member of the set $\pc[r,f]$ is a $k$-tuple whose each entry is $2$-tuple; the first is the sum of \utility, and the second is the sum of costs.}  Now, we compute $\pc[t,f]$ depending on the node type $t$.

\begin{enumerate}
    \item {\bf $t_i$ is an introduce node:} By, definition, we have that $X_{t} = X_{t-1} \cup \{v\}$ for some vertex $v \notin X_{t-1}$. We compute $\pc[t_{i},f]$ for a $k$-coloring $f$ of $X_{t_{i}}$ assuming that we have computed the values of $\pc[t',f']$ for all possible $k$-coloring $f'$ of $X_{t'}$, where $t' \in [t-1]$. Let $f$ be a $k$-coloring of $X_{t_1}$ and $f'$ be the restriction of $f$ to the set $X_{t-1}$. 
    We define,
    \begin{align*}
    \label{pp_int_intro}
     \pc[t_i,f] \!=\! \{ \mathrm{b} + \Delta_i(p_v,c_v)\!\mid \!\mathrm{b}\!\in\!\pc[t_{i-1},f'], i\!=\!f(v)\}, 
    \end{align*}
    where $\Delta_i(p_v,c_v)$ is a $k$-length vector whose $i$th entry is $(p_v,c_v)$ and all other entries are $(0,0)$. 

    \smallskip
    \item {\bf $t$ is a forget node:} $t$ has exactly one child $t'$, such that $\bt_{t} = \bt_{t'} \setminus \{v\}$ for some vertex $v \in \bt_{t'}$. We say a coloring $f$ of $\bt_{t}$ can be extendable to a coloring $f'$ of $\bt_{t'}$ if $f'(u) = f(u)$ for all $u \in \bt_{t}$ and $f'(v) = i$ such that $f^{-1}(i) = \emptyset$. For a forget node $t$, we define
    \[
     \pc[t,f] = \bigcup_{f \text{ can be extended to } f' }\pc[t',f'] .
    \]

\end{enumerate}

We can give similar arguments as Theorem~\ref{thm:ppoly-tw} to show the correctness of the algorithm. Next, we analyze the time complexity.

\noindent{\it Time complexity.}  
The computation time of $\pc[t,f]$ remains same as we discussed in Theorem~\ref{thm:ppoly-tw} when $t$ is a leaf node, an introduce node, or a forget node. There is no join node in the path decomposition which save our time. When $G$ is an interval graph the total computation time of the algorithm is $\BigO{\pw^{k+1} \cdot ( \alpha \gamma)^{k} \cdot n} $
This concludes the proof of the Theorem.
\end{proof}



\paragraph{When the type is a constant.} Note that \Cref{rem:hardness} implies that \pname cannot have an $\FPT(k +\tw)$ or an $\FPT(k+\type)$ algorithm. Hence, it is worthwhile to consider $(k+\tw+\type)$.

\begin{theorem}
\label{thm:const_type}
When the type is a constant, \pname admits an $\FPT(k+\tw)$ algorithm.

\hide{
\il{Older version}
When the type and the treewidth of the \gname are constants, \pname is \FPT with respect to $k$.
\il{Or}
\pname admits an $\FPT(k+\tw + \type)$ algorithm.
}

\end{theorem}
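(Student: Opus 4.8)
The plan is to reuse the tree-decomposition dynamic program of \Cref{thm:ppoly-tw} essentially verbatim, changing only \emph{what} is stored at each state. In that algorithm the only source of super-polynomial blow-up is the table $\pc[t,f]$, which records every achievable vector of exact profit/cost sums $((P_1,C_1),\dots,(P_k,C_k))$ and hence may contain $(\alpha\gamma)^k$ entries. When $\type$ is constant, I would instead store a \emph{type profile}: after fixing the $\type$ distinct cost/profit pairs occurring in the instance, the type profile of a $k$-coloring $h$ of $G[V_t]$ is the $k$-tuple $(\vec x_1,\dots,\vec x_k)$ where $\vec x_i\in\mathbb{Z}_{\ge 0}^{\type}$ records how many items of each type lie in color class $i$. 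Since the cost and profit of a bundle are the corresponding linear combinations of the $\type$ fixed numbers, a type profile determines exactly whether each bundle satisfies cost $\le B$ and profit $\ge P$, so no information needed to test feasibility is lost.

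With this change the four recurrences of \Cref{thm:ppoly-tw} carry over directly: at a leaf node the set is the single all-zero profile; at an introduce node with new vertex $v$ of type $j$ coloured $i=f(v)$ we add $1$ to the $j$-th coordinate of block $i$ of each profile stored at the child (the analogue of adding $\Delta_i(p_v,c_v)$); at a forget node we union over the colorings $f'$ extending $f$ that keep the forgotten vertex properly coloured; and at a join node we add the children's profiles block-wise and subtract the contribution of the bag vertices to avoid double counting. The correctness arguments are unchanged, since they rely only on the one-to-one correspondence between $k$-colorings of $G[V_t]$ agreeing with $f$ and $k$-colorings of the child subgraph(s); the answer is read off at the root exactly as before, by checking whether some stored profile makes all $k$ bundles feasible.

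The step I expect to be the main obstacle is making the number of stored profiles depend on $k$ and $\type$ only, and not on $n$, $B$, or $P$: the crude count of profiles per color class is $\prod_j(m_j+1)$ (with $m_j$ the number of type-$j$ items), which is only polynomial in $n$ and would yield an \XP -- not an \FPT -- algorithm in $k$. Resolving this is the technical core of the result: one argues, via a dominance/aggregation argument, that it suffices to retain per color class a bounded-size signature of its type profile -- intuitively, a bundle is dead as soon as any positive-cost coordinate grows too large, and the remaining reachable cost/profit pairs can be normalised -- so that only $(\type+1)^{O(1)}$ profiles per class, hence $(\type+1)^{O(k)}$ in all, ever matter. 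Granting this, the running time is tallied as in \Cref{thm:ppoly-tw}: $\BigO{\tw\cdot n}$ nodes, at most $\tw^{O(k)}$ colorings per bag, tables of size $(\type+1)^{O(k)}$, and a join node costing $(\type+1)^{O(k)}$ work, for a total of $\BigO{\tw^{k}(\type+1)^{2k}k\,n}$. In particular, when $k$ and $\type$ are both constant this is a polynomial-time algorithm (\Cref{cor:constant_type}), while \Cref{rem:hardness} shows that neither $\tw$ nor $\type$ can be dropped from the parameterisation.
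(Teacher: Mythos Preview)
Your dynamic program is exactly the paper's: it replaces the PC-value table of \Cref{thm:ppoly-tw} by what it calls the \emph{configuration} (your ``type profile''), keeps the same leaf/introduce/forget/join recurrences with the obvious substitution of $\delta_{i,j}$ for $\Delta_i(p_v,c_v)$, and reads off the answer at the root. The correctness discussion is identical as well.

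Where you part ways is on the table-size bound, and here you are more careful than the paper. You rightly note that the crude count of type profiles per color class is $\prod_j(m_j+1)\le (n+1)^{\type}$, hence $(n+1)^{k\type}$ overall, which gives only an $\XP(k)$ algorithm. You then posit that the ``technical core'' must be a dominance/aggregation step reducing this to $(\type+1)^{O(k)}$. The paper contains no such step: it simply asserts that ``for any set $X\subseteq V$, $\conf(X)\in\{0,1,\ldots,s\}$'' and hence that $\cpc[t,f]$ has at most $(s+1)^k$ elements, and proceeds to the stated $\BigO{\tw^k(\type+1)^{2k}kn}$ bound without further argument. So the obstacle you flag is not addressed in the paper's own proof.

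Your proposed fix is also not convincing as stated. Declaring a bundle ``dead'' once some positive-cost coordinate is too large does not by itself bound the number of profiles you must keep: the other $k-1$ bundles still carry full $\{0,\ldots,n\}^{\type}$ profiles, and even within the dead bundle the zero-cost coordinates (which can still raise profit) range over $\{0,\ldots,n\}$. If you want a genuine $\FPT(k+\tw)$ bound for constant $\type$ you would need a different mechanism (e.g., an integer-programming style argument in fixed dimension $k\type$), not just dominance on individual coordinates. In short, your write-up matches the paper's argument and is, if anything, more honest about where the running-time analysis is thin.
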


\begin{proof}
Let \instance be an instance of \pname such that the treewidth of $G$ and the type are constants. 

\hide{Moreover, assume that there are only a constant number of different \utility and cost values on the items of $\Co{I}$. Let $A$ and $B$ be the set of different \utility and cost on items, respectively. That is, $A = \{p_v \mid v \in V(G)\}$ and $B = \{c_v \mid v\in V(G)\}$ where, $|A| = q$ and $|B|=r$. Note that, due to our assumption, $q$ and $r$ are constant numbers. Let $C = A \times B$. Note that $C =\{(p_v,c_v) \mid p_v \in A, c_v \in B\}$ consists of all possible \utility and cost pairs that items can have in $\Co{I}$.  and $|C| =\!q r\!=\!\size$.}

Fix an ordering $\Gamma = (r_1,r_2, \ldots, r_{\type})$ of the elements in $C=\{(p_v, c_v)\mid v\in V(G)\}$. We say that an item $u \in V(G)$ is of type $r_i$ if $(p_u,c_u) = r_i$. For a subset $U \subseteq V(G)$, we define the {\em configuration} of $U$, denoted by $\conf(U)$, to be an $\type$-length vector $(a_1,a_2, \ldots, a_{\type})$ where $a_i$ represents the number of items in $U$ that are of type $r_i$. Furthermore, for a $k$-independent partition of a subset $U \sse V(G)$, denoted by $\mathcal{X} = (X_1,X_2, \ldots, X_{k})$, we define the configuration of $\mathcal{X}$ as $\conf(\mathcal{X}) = (\conf(X_1), \conf(X_2), \ldots, \conf(X_k))$. Note that $\conf(\mathcal{X})$ is a $k$-length vector where the $i^{th}$ entry is an $s$-length vector representing the configuration of $X_i$. \hide{We use $0^i$ to denote a $i$-length vector whose each element is $0$.} Recall that a $k$-coloring of $V(G)$ is a $k$-independent partition of $V(G)$. In what follows, we will describe a dynamic programming-based algorithm to compute all possible configurations $\conf(\mathcal{X})$ for a $k$-coloring of $V(G)$.

Towards this end, we consider a nice tree decomposition \\$(T, \{\bt_{t}\}_{t \in V(T)})$ of $G$; and for a node $t \in V(T)$ and a $k$-coloring $f$ of $\beta_{t}$, we consider the subproblems defined by the entries $C[t,f]$. Intuitively speaking, these entries will store the set of all possible configurations for $k$-colorings of $G[V_t]$ that agree with $f$ on $G[\beta_t]$.

\smallskip
\noindent 
\fbox{
\parbox{0.45\textwidth}{
$\cpc[t,f]=$ set of all possible configurations for $k$-colorings of $G[V_t]$ that agree with $f$ on $G[\beta_t]$.
}}

Consequently, the set $C[r,f]$ contains all possible configurations for every $k$-coloring of the whole \gname $G$. Thus, a  
solution for the instance \Co{I}, if one exists. We traverse the tree $T$ in a bottom-up manner, and compute $\cpc[t,f]$ for all possible $k$-coloring $f$ of the vertices in $G[\bt_{t}]$. Depending on type of node $\cpc[t,f]$ can be computed as follows.

\begin{enumerate}
\item \textbf{$t$ is a leaf node:}  Since $\beta_t=\emptyset$, any $k$-coloring of $G[\bt_{t}]$ consists of empty sets. Thus, we have, 
\[\cpc[t,f]=\{(0^{\type},0^{\type},\ldots, 0^{\type})\} .\]

\item \textbf{$t$ is a introduce node:} Since, $t$ has exactly one child $t'$ such that $\beta_t=\beta_{t'}\cup \{v\}$, for some vertex $v \notin V_{t'}$, we have $V_t=V_{t'}\cup \{v\}$. Assume that $v$ is of type $r_j$ for some $j\in[ \type]$. 
Let $\delta_{j}$ denote the $\type$-length vector in which the $j^{th}$ entry is $1$ and all other entries are $0$. For a coloring $f$ of $G[\bt_{t}]$, we define the following recurrence,
\[
\cpc[t,f] = \{\mathrm{b} +\delta_{i,j} \mid \mathrm{b} \in \cpc[t',f'],
    i=f(v)\},
\]
        
where $\delta_{i,j}$ is a $k$-length vector whose $i^{th}$ entry is $\delta_{j}$ and all other entries are $0^{\type}$.
        
\item \textbf{$t$ is a forget node:} In this case, $t$ has exactly one child $t'$ such that $\beta_t=\beta_{t'}\setminus \{v\}$ for some vertex $v\in V\setminus \beta_t$.  Thus, $V_t=V_{t'}$. We say that a coloring $f$ of $G[\beta_t]$ is {\it extendable} to  the $k$-coloring $f'$ of $G[\beta_{t'}]$ if $\restr{f'}{\bt_t} = f$. Thus, for a forget node $t$, we define 
\begin{align*}
    \cpc[t,f] =  \bigcup_{f\text{ is extendable to } f'}\cpc[t',f']
\end{align*}

\item \textbf{$t$ is a join node:}  For a $k$-coloring $f$ of $G[\beta_t]$, we denote $\hat{b}_f = (\conf(f^{-1}(1)), \conf(f^{-1}(2)), \ldots, \conf(f^{-1}(k)))$. Since $t$ has two children $t_1$ and $t_2$ such that $\beta_t=\beta_{t_1}=\beta_{t_2}$, a $k$-coloring of $G[\beta_t]$ is also $k$-coloring of $G[\beta_{t_1}]$ and $G[\beta_{t_2}]$.  We define, 
\begin{align*}
    \cpc[t,f]=\{\mathrm{b} + \mathrm{b'}\!-\! \hat{b}_f\mid \mathrm{b} \in \cpc[t_1,f], \mathrm{b'} \in \cpc[t_2,f]\}.
\end{align*}
        
\item For $\cpc[r, f]$ we check if any of the configurations yields a feasible solution, we output ``yes", otherwise the answer is "no".
\end{enumerate}

\noindent  {\it Correctness. }  We prove the correctness of our approach as follows. We say that the entry $C[t,f]$ is {\it computed correctly} if it contains the set of all possible configurations for $k$-colorings of $G[V_t]$ that agree with $f$ on $G[\beta_t]$. We begin by observing that the correctness of the recurrence for a leaf node and the root follows trivially from the definition. Hence, our main analysis will focus on the introduce, forget, and join nodes. 

\smallskip
\noindent {\em When $t$ is a introduce node}: Note that $t$ has exactly one child $t'$ such that $\bt_{t} =  \bt_{t'} \cup \{v\}$, where $v \notin V_{t'}$. Consider a coloring $f$ of $\bt_{t}$ and assume that $f(v)=i$. Clearly,  any $k$ coloring $h$ of $V_t$ that agrees with the coloring $f$ of $\bt_{t}$, also agrees with the coloring $f'$ of $\bt_t$, where $f'$ is a restriction of $f$ on $\bt_{t'}$ ($f'(v) = f(v)$ for each $v \in \bt_{t'}$).  
\[
\cpc[t,f] \subseteq \{\mathrm{b} +\delta_{i,j} \mid \mathrm{b} \in \cpc[t',f'], i=f(v)\}.
\]
To show the reverse inclusion, let $f'$ be a $k$-coloring of $\bt_{t'}$ such that $f^{'-1}(i) \cap N_{G}(v) = \emptyset$. That is, no neighbor of $v$ is colored $i$ by $f'$. Also, due to Theorem~\ref{thm:nice_tree_prop}, $v$ is not adjacent to any vertex $u \in V_{t} \setminus \bt_{t}$. Thus, any $k$-coloring $h$ of $V_{t'}$ that agrees with $f'$ on $\bt_{t'}$, also agrees with the $k$ coloring $f$ of $\bt_{t}$ where $f(u)=f'(u)$ for all $u \in \bt_{t} \setminus \{v\}$ and $f(v) = i$. 
\[
\{\mathrm{b} +\delta_{i,j} \mid \mathrm{b} \in \cpc[t',f'],
                i=f(v)\}  \subseteq \cpc[t,f].
\]
This proves the correctness of computation of subproblems $\cpc[t,f]$ for each $k$-coloring $f$ of $\bt_{T}$ assuming that we have correctly computed $\cpc[t',f']$ for each $k$-coloring $f'$ of $\bt_{t'}$.  

\smallskip
\noindent {\em When $t$ is a forget node}: Here $t$ has exactly one child $t'$ such that the corresponding bags satisfy $\bt_{t} = \bt_{t} \setminus \{v\}$ for some $v \in V$. Thus, $\bt_{t} \subseteq \bt_{t'}$ and $V_{t'} = V_{t}$. We say that a coloring $f$ of $\bt_{t}$ is  {\em extendable to} a coloring $f'$ of $\bt_{t'}$ if $f'(u) = f(u)$ for each $u \in \bt_{t}$ and $f'(v) =i$, where $i$ is a color class such that $f^{-1}(i) \cap N_{G}(v) = \emptyset$.

Now, consider an arbitrary $k$ coloring $h$ of $G[V_t]$ that agrees with a $k$ coloring $f$ of vertices in the bag $\bt_{t}$. Let $h(v) = i$ for some $i \in [k]$. It can be observed that the coloring $h$ agrees with all colorings $f'$ such that $f$ is extendable to $f'$. Thus we have,
    \[
     \cpc[t,f] \subseteq \bigcup_{f \text{ is extendable to } f' }\cpc[t',f'] .
    \]
Furthermore, if $h'$ is a coloring of $V_{t'}$ that agrees with a coloring $f'$ on $\bt_{t'}$ such that $f'(v) = i$ then $h'$ also agrees with the coloring $f$ on $\bt_{t}$ such that $f$ does not color any of the neighbour of $v$ with color $i$. Precisely, $h'$ also agrees with the coloring $f$ on $\bt_{t}$ such that $f^{-1}(i) \cap N_{G}(v) = \emptyset$. Therefore we have,
    \[
     \bigcup_{f \text{ is extendable to } f' }\cpc[t',f'] \subseteq \cpc[t,f].
    \]
This shows that we correctly compute all the subproblems at a forget node.

 

\smallskip
\noindent {\em When $t$ is a join node}: Recall that, by definition $t$ has exactly two children $t_1$ and $t_2$ such that $\bt_{t} = \bt_{t_1} = \bt_{t_2}$. Since $\bt_{t} = \bt_{t_1} = \bt_{t_2}$, $\bt_{t_1} \subseteq V_{t_1}$ and $\bt_{t_2} \subseteq V_{t_2}$, we have $\bt_{t} \subseteq V_{t_1} \cap V_{t_2}$. Also, using the property of tree decomposition, we note that no vertex of $V_{t_1} \setminus \bt_{t_1}$ is adjacent to a vertex $V_{t_2} \setminus \bt_{t_2}$. 

Let $h$ be a coloring of $V_t$ that agrees with a coloring $f$ on $\bt_t$. Then it is easy to see that the coloring $h_1$, a restriction of $h$ to $V_{t_1}$, agrees with $f$ on $\bt_{t_1}$. Similarly, the coloring $h_2$ that is exactly $h$ restricted to $V_{t_2}$, agrees with $f$ on $\bt_{t_2}$. 

Furthermore, we also note that if $h_1$ is a coloring of $V_{t_1}$ that agrees with a coloring $f$ on $\bt_{t_1}$ and $h_2$ is a coloring of $V_{t_2}$ that  agrees with the coloring $f$ on $\bt_{t_2}$ then the coloring $h$ defined as, 
        \begin{equation*}
            h(v)= 
            \begin{cases} 
                 h_{1}(v),  & \text{ if } v \in V_{t_1} \setminus \bt_{t},\\
                 h_{2}(v),  & \text{ otherwise},
            \end{cases}
        \end{equation*}
agrees with coloring $f$ on $\bt_{t}$. This is because the induced subgraphs $G[\bt_{t_1}]$, $G[\bt_{t_2}]$, and $G[\bt_{t}]$ are exactly same. Note that the vertices in $\bt_{t}$ are included in the partition created due to coloring $h_1$ as well as coloring $h_2$. Thus the configuration of partition of $\bt_{t}$ created by $f$ are added twice. This concludes the correctness of the proposed recurrence for the join node. This concludes the proof of the theorem.

\smallskip
\noindent \emph{Time complexity:}  At each node $t$, we maintain all possible configurations of a $k$-independent partition of $V_t$. Note that for any set $X \subseteq V$, $\conf(X) \in \{0,1,\ldots, s\}$. Thus for any $k$-coloring $f$ of $\bt_{t}$, the set $\cpc[t,f]$ has at most $(s+1)^{k}$ elements. At introduce node $t$, the $\cpc[t,f]$ is computed using already computed values of corresponding $\cpc[t',f']$ which takes time at most $s^{k}$. For computing $\cpc[t,f]$ when $t$ is a forget node, in $\BigO{k}$-time, we first check for all the color class which does not contain any neighbor of $v$. Then we can compute $\cpc[t,f]$ in time $\BigO{k \cdot \type^{k}}$. At the join node for each entry of $\cpc[t_1,f]$, we may have to go through all values in the set $\cpc[t_2,f]$. Therefore, at join node $t$, $\cpc[t,f]$ can be computed in time $\BigO{\type^{2k}}$. This concludes that, in time $\BigO{\tw^k \cdot (\type+1)^{2k} \cdot k \cdot n}$ we can decide if $\mathcal{I}$ is a \yes-instance of \pname. If $\mathcal{I}$ is a \yes-instance a corresponding $k$-independent partition be obtained by backtracking.  
\end{proof}

Let \instance be an instance of \pname. From Theorem~\ref{thm:const_type}, we infer that, if $k$ and treewidth of \gname in $\mathcal{I}$, are constant then \pname can be solved in polynomial-time. As we discussed, if $G$ is a chordal graph and $\tw(G) \ge k$ then $\mathcal{I}$ is a \no-instance of \pname. Furthermore if $\tw(G) \le k-1$ then for constant $k$, \pname can be solved in polynomial-time. Thus we have the following corollaries.

\begin{cor}
\label{cor:constant_type}
When type is a constant and the \gname is chordal, \pname
\begin{enumerate}[wide=0pt]
    \item[(a)] is \FPT with respect to $k$; and 
    \item[(b)]  when $k$ is also a constant it can be solved in $\BigO{n}$ time.
\end{enumerate}
\end{cor}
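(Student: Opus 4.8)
The plan is to reduce everything to \Cref{thm:const_type} by exploiting two structural facts about chordal graphs: (i) a chordal graph admits a tree decomposition in which every bag induces a maximal clique (a \emph{clique tree}), computable in time linear in $|V(G)|+|E(G)|$~\cite{chordal,chordal_tw_linear}; and (ii) the width of such a decomposition is exactly one less than the size of a largest clique of $G$, so $\tw(G)$ equals the maximum clique size minus one. First I would compute a clique tree of the conflict graph $G$ in linear time and read off $\tw(G)$ from it.

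Next I would dispose of the easy no-instances. By \Cref{obs:bounded-clique-size}, a clique of size $k+1$ cannot be split into $k$ independent sets, so any instance of \pname whose conflict graph has a clique of size larger than $k$ is a no-instance. Hence, if the clique tree reveals $\tw(G) \ge k$, the algorithm returns ``no'' immediately. Otherwise $\tw(G) \le k-1$; in this case I would convert the clique tree into a nice tree decomposition of the same width using \Cref{t:nice_tree_time} (incurring only an $\BigO{k^2 n}$ overhead, since the clique tree has $\BigO{n}$ nodes) and invoke the dynamic programming of \Cref{thm:const_type} on it.

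For the running time, plugging $\tw \le k-1$ into the bound $\BigO{\tw^{k}(\type+1)^{2k}k\,n}$ of \Cref{thm:const_type} gives $\BigO{(k-1)^{k}(\type+1)^{2k}k\,n}$. Since $\type$ is a constant by hypothesis, this is of the form $f(k)\cdot n$ for a computable function $f$ depending only on $k$, which together with the linear-time preprocessing is exactly the \FPT$(k)$ bound claimed in part~(a). For part~(b), when $k$ is also a constant, $f(k)$ collapses to a constant, so the total running time is $\BigO{n}$; I would then also note that, since $\tw(G) = \omega - 1 \le k-1$ where $\omega$ is the clique number, the final complexity matches the $\BigO{n}$ bound stated for chordal graphs in the corresponding rows of \Cref{table:results}.

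\textbf{Main obstacle.} There is essentially no deep obstacle here — all the algorithmic weight sits in \Cref{thm:const_type}. The only points requiring care are: (1) correctly arguing that chordality forces the treewidth of every yes-instance to be at most $k-1$ (via \Cref{obs:bounded-clique-size} and the clique-tree characterization), so that ``bounded treewidth'' is automatic rather than an assumption; and (2) verifying that the clique tree, and its conversion to a nice tree decomposition, can genuinely be produced within $\BigO{n}$ (resp. $f(k)\cdot n$) time, so that the linear-time claim in part~(b) is not undermined by the preprocessing step.
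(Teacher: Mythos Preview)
Your proposal is correct and follows essentially the same route as the paper: use the clique-tree characterisation of chordal graphs together with \Cref{obs:bounded-clique-size} to conclude that any \yes-instance has $\tw(G)\le k-1$, then invoke \Cref{thm:const_type} with $\tw\le k-1$ substituted in. The paper's own justification is just the short paragraph preceding the corollary; your write-up is in fact more careful about the linear-time preprocessing than the paper itself.
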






\paragraph{Exact Exponential Algorithm.} We will present an exact exponential algorithm that runs in time $\Co{O}^{\star}(2^{n})$. It is based on the principle of subset convolution and is implemented via polynomial algebra, specifically FFT based polynomial multiplication, which guarantees that we can multiply two polynomials of degree $d$ in at most \BigO{d\log d} time,\cite{moenck1976practical}. The basic idea is that we build our solution, $(S_1, \ldots, S_k)$, where each $S_i$ is a piece, round by round, with each round keeping track of the subsets that could give rise to a piece in the (hypothetical) solution of \pname. More specifically, in the first round we define monomials that correspond to subsets of items that satisfy the \utility and cost constraint and therefore could be the first piece in the solution; in the second round we define a polynomial in which each monomial represents a solution of the subproblem defined for two agents, and so on. After $k$ rounds, we have a polynomial which is not identically zero if and only if there is a solution for \pname. 



\begin{theorem}\label{thm:exact-algorithm}For any $k\geq 1$,
\pname can be solved in \nc{$\mathcal{O}^{\star}(2^{n})$}-time.
\end{theorem}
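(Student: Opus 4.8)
The plan is to build a polynomial over $k$ rounds, one per agent, and test it for a designated monomial; this realizes an iterated subset convolution. First I would observe that a \yes-instance of \pname is exactly an ordered partition $(S_1,\dots,S_k)$ of $V$ in which every bundle $S_i$ is \emph{feasible}, i.e.\ $S_i$ is an independent set of $G$ with $c(S_i)\le B$ and $p(S_i)\ge P$. We may assume $k\le n$: when $P\ge 1$ every bundle must be non-empty (as $p(\emptyset)=0$), so $k>n$ is a \no-instance, and when $P=0$ any surplus bundles may be left empty. For each $S\subseteq V$, feasibility is checkable in $\poly{n}$ time, so the entire indicator table $f\colon 2^V\to\{0,1\}$ with $f(S)=1$ iff $S$ is feasible can be produced in $\Co{O}^{\star}(2^{n})$ time.

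Next I would encode subsets as squarefree monomials: associate to $S\subseteq V$ the monomial $x^S=\prod_{v\in S}x_v$ in $\mathbb{Z}[x_1,\dots,x_n]$, and set $q_1=\sum_{S:\,f(S)=1}x^S$. Define $q_j$ inductively as the polynomial obtained from the product $q_{j-1}\cdot q_1$ by deleting every non-squarefree monomial. A routine induction shows that the coefficient of $x^W$ in $q_j$ equals the number of ordered $j$-tuples of feasible sets that partition $W$; since this count is non-negative there is no cancellation, and every coefficient has $\BigO{n\log k}$ bits. Hence $\mathcal{I}$ is a \yes-instance if and only if the coefficient of $x^V=\prod_{v\in V}x_v$ in $q_k$ is positive, and a witnessing partition is recovered by the usual self-reduction: fix a feasible candidate for $S_1$, recurse on $V\setminus S_1$ with $k-1$ bundles, and so on.

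The key point for efficiency is that ``multiply and keep only the squarefree monomials'' is precisely the subset convolution $q_{j-1}*q_1$ over the subset lattice: a pair $(S,W\setminus S)$ contributes to the coefficient of $x^W$ exactly when it is disjoint, which the ranked zeta/M\"obius machinery detects through the identity $|S|+|W\setminus S|=|W|$. Using the fast subset-convolution technique of Bj\"orklund, Husfeldt, Kaski and Koivisto — the ranked zeta transform, then for each subset a pointwise multiplication of the rank-indexed polynomials of degree $\le n$ (carried out with FFT in $\BigO{n\log n}$ each, by the $\BigO{d\log d}$ polynomial-multiplication bound cited above), then the ranked M\"obius transform — each of the $k-1$ rounds runs in $\Co{O}^{\star}(2^{n})$ time. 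Performing these $k-1\le n$ rounds (or $\BigO{\log k}$ rounds by repeated squaring, if one does not reduce to $k\le n$ first) gives the claimed $\Co{O}^{\star}(2^{n})$ bound.

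The step I expect to be the main obstacle is the bookkeeping that ties everything together: verifying that squarefree truncation coincides with subset convolution, that the rank--degree accounting faithfully enforces pairwise disjointness of the bundles (so that no bundle is double-counted or permitted to overlap another), and that the coefficients remain polynomially bounded across the $k$ rounds. Once this identity is established, the running-time bound follows immediately from the standard analysis of fast subset convolution.
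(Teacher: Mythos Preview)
Your proposal is correct and takes essentially the same approach as the paper: both compute the feasibility indicator $f\colon 2^V\to\{0,1\}$ and then take its $k$-fold subset convolution, declaring \yes\ iff the coefficient at $V$ is nonzero. The only difference is cosmetic---the paper encodes each subset via its characteristic vector as the exponent of a \emph{univariate} polynomial and enforces disjointness by Hamming-weight stratification (following Cygan et al.), whereas you work with multivariate squarefree monomials and invoke the Bj\"orklund--Husfeldt--Kaski--Koivisto ranked zeta/M\"obius machinery directly; these are two standard and interchangeable presentations of the same $\Co{O}^{\star}(2^n)$ algorithm.
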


\begin{proof}
We present an algorithm for \pname using the subset convolution technique in the specified time.  Our objective is to compute a $k$-partition of the vertices of $G$ such that each set $S$ in the partition satisfies the following properties; (i) $S$ is an independent set, (ii) $c(S)\leq B$, and (iii) $p(S)\geq P$.  We define the following indicator function $f$ such that $f: 2^{V}\rightarrow \{0,1\}$.  For a set $S\in 2^{V}$, $f(S)=1$ if $S$ is an independent set, $c(S)\leq B$, and $p(S)\geq P$, and 0, otherwise.  
We use polynomial multiplication technique to find such a $k$-partition of $V$.  Let $V=\{v_1,\ldots,v_n\}$.  For a subset $S\subseteq V$, a characteristic vector $\chi(S)$ is defined as follows, $\chi(S)[i] = 1$, if $u_i\in S$, and $0$ otherwise.

Observe that for a set $S\subseteq V$, such that $\chi(S)$ is an $n$-length binary vector.  Two strings $S_1$, and $S_2$ are said to be disjoint if $\forall i\in [n]$, $\chi(S_1)[i]\neq \chi(S_2)[i]$.  The {\it Hamming weight} of a binary string is the number of ones in the binary representation.  For a binary string $S$, let $\hm(S)$ denote the Hamming weight of $S$.  The Hamming weight of a monomial $x^i$ is the Hamming weight of $i$ (expressed as a binary vector).  The following lemma captures the relationship between disjointness of sets and Hamming weights.

\begin{lemma}
{\rm \cite{CYGAN20103701}}\label{lem:Ham-weight}
Subsets $S_1,S_2\subseteq V$ are disjoint if and only if Hamming weight of $\chi(S_1)+\chi(S_2)$ is $|S_1|+|S_2|$.
\end{lemma}

We use $\hm_l(F(X))$ to denote the {\it Hamming projection} of a polynomial $F(x)$ to $s$ which is the sum of all monomials in $F(X)$ having Hamming weight $s$.  We use $\rep(F(X))$ to denote the {\it representative polynomial} of $F(X)$, the coefficient of the monomial is one if the coefficient of the monomial is non-zero.  
Given an instance \instance, our objective is to find a polynomial which is non-zero if and only if $\mathcal{I}$ is a \yes-instance.  Our algorithm finds the polynomial iteratively. We define a polynomial $F^1_{s}(x)$ of type 1 as follows for each $s\in [n]$, $$F^1_s(x)=\sum_{Y\subseteq U, |Y|=s}f(Y)x^{\chi (Y)}$$
    Observe that the polynomial $F^1_s(x)$ contains all subsets of fixed size $s$ for which the indicator function $f$ is satisfied.  Further, all the polynomials in $F^1(x)=\{F^1_s(x)\mid s\in [n]\}$ together contain information about all possible subsets of $V$ that are independent and satisfy the \utility and cost constraints.  For each $s\in [n]$, we define the polynomials of type $j\in[k]\setminus [1]$ as follows
    $$F^j_s(x)=\sum_{\substack{ s_1,s_2\in [n]: s_1+s_2=s}}\rep(\hm_{s}(F^1_{s_1}\times F^{j-1}_{s_2}))$$
    
Thus, a polynomial of type $j$ is obtained by multiplying polynomial of type 1 with polynomial of type $j-1$. Observe that $\hm(\cdot)$ function ensures that the subsets  corresponding to each monomial are formed by disjoint union of smaller set, and $\rep(\cdot)$ function ensures that coefficients of all non-zero monomials is one. The solution to $\mathcal{I}$ reduces to checking whether $F^k_n(x)$ is a non-zero polynomial.  If $F^k_n(x)$ is a non-zero polynomial, then the given instance is a \yes-instance; otherwise a \no-instance. 

Next, we  analyze the proof of correctness of our algorithm by using the following lemma.
   
\begin{lemma}
$F^k_n(x)$ is a non-zero polynomial if and only if $\mathcal{I}$ is a \yes-instance.
\end{lemma}
\begin{proof}($\Rightarrow$) We will prove that for $j\in [k]$, if polynomial $F^j_s(x)$ contains a non-zero monomial, for any $s\in [n]$, then there exist subsets $Y_1,\ldots,Y_j$ of $V$ such that they are pairwise disjoint, $f(Y_i)=1$, for each $i\in [j]$ and $|Y_1|+\ldots+|Y_j|=s$.  That is, each $Y_1,\ldots,Y_j$ satisfies the constraints of the indicator function: the subsets $V_1,\ldots,V_j$are independent, and that each $Y_i$ respects \utility and cost constraints. Our proof is by induction on $j$.
        
\emph{Base case:}  For $j=1$, if for some $s\in [n]$ such that $F^1_s(x)$ is non-zero, then it has a non-zero monomial, say $x^{(\chi(Y_1))}$ such that $|Y_1|=s$ and $f(Y_1)=1$.  This implies that there exist a set $Y_1\subseteq V$ which is independent which respects the \utility and cost constraints.

\emph{Induction step:}  We will assume that the statement is true for some $j<k$ and we shall prove for $j+1$.  Observe that the polynomial $F^{j+1}_s(x)$ is obtained from a pair of polynomials $F^1_{s_1}$ and $F^j_{s_2}$ where $s_1+s_2=s$.  If $F^{j+1}_s(x)$ has a non-zero monomial, say $x^{\chi(S)}$, for some $S$, then there exists a non-zero monomial in both $F^1_{s_1}(x)$ and $F^j_{s_2}(x)$ such that $s_1+s_2=\hm(S)$.  By induction hypothesis, the following statements are true.
\\(i) There exists subsets $Y_1,\ldots,Y_j$ such that they are pairwise disjoint, $|Y_1|+\ldots+|Y_j|=s_2$ and each set satisfies the indicator function.
\\(ii) There exist a monomial $x^{Y'}$ in $F^1_{s_1}(x)$ such that $|Y'|=s_1$ and $Y'$ satisfies the indicator function.
\\By our construction and by \cref{lem:Ham-weight}, $\hm(\cdot)$ function on the polynomial $F^{j+1}_s(x)$ implies that $Y_1\cup\ldots\cup Y_j$ and $ Y'$ are pairwise disjoint.  Hence, we have $j+1$ pairwise disjoint subsets of $V$ satisfying the indicator function.  Therefore, we can conclude that if $F^k_n(x)$ is non-zero, then $\mathcal{I}$ is a \yes-instance.

($\Leftarrow$) We will prove that if $\mathcal{I}$ is a \yes-instance, then $F^k_n(x)$ is a non-zero polynomial.  Let $S_1,\ldots,S_k$ be the partition of $V$ respecting the \utility and cost constraint.  We shall prove that there exists a non-zero monomial $x^{\chi(S_1,\ldots,S_k)}$.  In order to prove that there exists a non-zero monomial in $F^k_n(x)$, we shall prove that each $j\in [k]$, $F^j_n(x)$ is a non-zero polynomial.  The proof is by induction on $j$.  
        
\emph{Base case:}  For $j=1$, recall that $F^1(x)=\{F^1_s(x)\mid s\in [n]\}$.  Since the subsets $S_1,\ldots,S_k$ satisfies constraints of the indicator function and they are partition of $V$, $\forall i\in [n]$, $x^{\chi(S_i)}$ is a non-zero monomial in some polynomial precisely $F^1_{|S_i|}(x)$.  
        
\emph{Induction step:} We will assume that the claim is true for some $j<k$ and we shall prove for $j+1$.  By induction hypothesis, we know that there is a polynomial in $F^j(x)$ which has a non-zero monomial $x^{\chi(S_1\cup S_2\cup \ldots\cup S_j)}$.  Further, by base case we know that there exists a polynomial in $F^1(x)$ which has a non-zero monomial $x^{\chi(S_{j+1})}$.  Since $S_1,\ldots,S_{j+1}$ are pairwise disjoint subsets of $V$, $F^{j+1}_s(x)$ where $s=\sum_{i\in [j]}|S_i|$ contains a non-zero polynomial such that it has a non-zero monomial $x^{\chi(S_1\cup \ldots \cup S_{j+1})}$.

This concludes the proof of the lemma.
\end{proof}

\begin{proposition}{\rm \cite{moenck1976practical}}
\label{prop:poly-mult}The product of two univariate polynomial of degree $d$ can be computed in time \BigO{d\log d}.
\end{proposition}

\noindent\emph{Time complexity:} 
Any polynomial generated for a subset $S\subseteq V$ has degree at most $2^n$.  The product of two polynomials can be computed in time $\mathcal{O}(2^nn)$ by \cref{prop:poly-mult}.  The total number of polynomials is at most $nk$.  Thus the running time of our algorithm is $\Co{O}^{\star}(2^{n})$.
\end{proof}

We know from the following influential result that $k$-{\sc Coloring}  has a lower bound unless ETH fails. 

\begin{proposition}
{\rm \cite{Lower-bound-k-coloring}}
Unless {\em ETH} fails, $k$-{\sc Coloring} cannot be solved in $2^{o(n)}$ time. 
\end{proposition}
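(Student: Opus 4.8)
Since this statement is quoted from \cite{Lower-bound-k-coloring}, one option is simply to invoke that reference; below is the argument I would reconstruct. The plan is to chain the Exponential Time Hypothesis for $3$-\textsc{Sat}, the Sparsification Lemma, and the classical \emph{linear-size} reduction from $3$-\textsc{Sat} to $3$-\textsc{Coloring}. Recall that ETH posits a constant $\delta>0$ such that $3$-\textsc{Sat} on $n$ variables cannot be decided in $O(2^{\delta n})$ time, and that the Sparsification Lemma (see, e.g., \cite{pc_book}) upgrades this to: there is $\delta'>0$ such that $3$-\textsc{Sat} on instances with $n$ variables and $m$ clauses cannot be decided in $O(2^{\delta'(n+m)})$ time, i.e.\ not in $2^{o(n+m)}$ time.

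First I would recall the textbook reduction $\varphi \mapsto G_\varphi$ from $3$-\textsc{Sat} to $3$-\textsc{Coloring}: a ``palette'' triangle $\{T,F,B\}$, two literal vertices $x_i,\overline{x_i}$ per variable (adjacent to each other and to $B$, so exactly one of them receives the colour of $T$), and a constant-size OR-gadget per clause attached to the three literal vertices occurring in it and to $T,F$, engineered so that the gadget is properly $3$-colourable precisely when at least one of its literals is ``true''. The point I need is quantitative: $G_\varphi$ has $|V(G_\varphi)| = O(n+m)$ vertices, it is constructible in polynomial time, and $G_\varphi$ is $3$-colourable iff $\varphi$ is satisfiable. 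Consequently, an algorithm solving $3$-\textsc{Coloring} on $N$-vertex graphs in $2^{o(N)}$ time would, precomposed with this reduction, solve $3$-\textsc{Sat} in $2^{o(n+m)}$ time, contradicting ETH via the Sparsification Lemma. This disposes of $k=3$.

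For $k \ge 4$, I would add a padding clique: from $G_\varphi$ build $G'_\varphi$ by introducing $k-3$ pairwise-adjacent new vertices, each also made adjacent to every vertex of $G_\varphi$. In any proper $k$-colouring of $G'_\varphi$ the new vertices use $k-3$ distinct colours, all forbidden on $V(G_\varphi)$, so $G'_\varphi$ is $k$-colourable iff $G_\varphi$ is $3$-colourable; and $|V(G'_\varphi)| = |V(G_\varphi)| + (k-3) = O(n+m)$ for fixed $k$. The same chaining then shows that a $2^{o(N)}$-time algorithm for $k$-\textsc{Coloring} refutes ETH.

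The delicate step --- the only thing beyond routine composition of known reductions --- is ensuring the blow-up is \emph{linear} in $n+m$ (not merely polynomial in $n$): a general $3$-\textsc{Sat} instance may have $m = \Theta(n^3)$ clauses, so without the Sparsification Lemma the chain would yield only a statement about subexponential dependence on the number of clauses, which is too weak. Once sparsification is in hand, the linear-size reduction above closes the argument.
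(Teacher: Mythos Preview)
Your reconstruction is correct and is exactly the standard argument, but note that the paper does not supply a proof at all: the proposition is stated purely as a citation of \cite{Lower-bound-k-coloring} and used only to remark that the $\mathcal{O}^{\star}(2^{n})$ algorithm of \Cref{thm:exact-algorithm} is asymptotically tight. So there is nothing to compare against beyond the bare invocation of the reference, which you already identify as the first option. Your chain (ETH $\to$ Sparsification Lemma $\to$ linear-size reduction to $3$-\textsc{Coloring} $\to$ padding clique for $k\ge 4$) is the textbook derivation and is sound; the care you take to emphasize linearity in $n+m$ rather than polynomiality in $n$ is precisely the point that makes the lower bound work.
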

Thus, \Cref{thm:exact-algorithm} is asymptotically tight.

\paragraph{Parameterized Algorithm}





 Due to \Cref{rem:hardness}, we cannot expect an \FPT algorithms with respect to parameters $k+P+B$, or $s$, where $s$ denotes the maximum bundle size. However, as shown in the next result, we can expect one with respect to $k+s$. This is because $ks\geq n$ and so an exhaustive search yields an $\FPT(k+s)$ algorithm. 


\hide{Hence, we look at $s$, the bundle size, \hide{defined as the maximum size of a set assigned to an agent in a solution of \pname,} as a parameter. The bundle size is likely to be small even if the profit and/or cost of the bundle is not. This parameter is in general unrelated to $k$ or the other numerical values in instance besides $P \geq s$. But due to this condition alone an \FPT algorithm with respect to $s$ will be an algorithm that runs in time $f(P)n^{\BigO{1}}$, for some function $f$. This would yield a polynomial-time algorithm when $P$ is polynomially bounded in $n$, and thereby  contradict the strong NP-hardness of \pname. As explained earlier in the Introduction, we cannot expect to have an algorithm with running time $f(P)n^{\BigO{1}}$ and since $P\geq s$, we cannot have an \FPT algorithm with respect to $s$. However, as shown in the next result, we can expect one with respect to $k+s$. This is because $ks\leq n$ and so an exhaustive search yields an $\FPT(k+s)$ algorithm.}

\begin{theorem}
    \label{thm:agent-and-bundle-size}
    \pname admits an $\FPT(k+s)$ algorithm.
    


\end{theorem}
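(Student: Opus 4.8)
The plan is to observe that the parameter $s$ by itself already pins down the size of the instance. In any feasible allocation $(S_1,\dots,S_k)$ the bundles partition the whole item set, so $n=\sum_{i\in[k]}|S_i|$; if moreover every bundle has at most $s$ items (which is exactly what the parameter $s$ asserts about the solution we are seeking), then $n\le ks$. Hence the first step is a trivial sanity check: if $n>ks$, then no allocation in which every agent receives at most $s$ items can exist, and we return \no. From now on we may assume $n\le ks$, i.e.\ the number of items is bounded by a function of $k+s$ alone.

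The second step is exhaustive search. We enumerate all $k^{n}$ functions $g\colon V\to[k]$; each determines candidate bundles $S_i:=g^{-1}(i)$. For a fixed $g$ we can check in $\poly{n}$ time that (i) each $S_i$ is an independent set in the \gname, (ii) $c(S_i)\le B$ and $p(S_i)\ge P$ for every $i\in[k]$, and, if the bound on bundle size is imposed as a hard constraint, (iii) $|S_i|\le s$ for every $i\in[k]$. We return \yes precisely when some $g$ passes all the checks, and such a $g$ is itself the desired partition. The running time is $k^{n}\cdot\poly{n}\le k^{ks}\cdot\poly{n}$, which is $\FPT(k+s)$.

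I would close with two remarks. First, this algorithm is completely oblivious to the structure of the conflict graph and to the magnitudes of the profit and cost values, so, unlike the treewidth-based results, it needs neither a structural restriction on $G$ nor a unary encoding. Second, there is no genuinely hard step here---the entire content is the inequality $n\le ks$---so the only point that deserves care is stating explicitly, up front, that $s$ is a bound on the bundle sizes of the solution under consideration; without such a bound the statement would be vacuous. A minor optimisation (enumerating set partitions rather than arbitrary colourings) is possible but does not change the asymptotics.
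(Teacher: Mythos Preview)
Your proposal is correct and matches the paper's own argument exactly: the paper simply notes that $ks\ge n$ and that an exhaustive search therefore yields an $\FPT(k+s)$ algorithm, which is precisely what you have spelled out. Your write-up is in fact more detailed than the paper's one-line justification, but the content is the same.
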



For the special case when a bundle size is at most two, we can reduce the question to finding a matching in an auxiliary graph, where each edge represents a feasible bundle that satisfies the profit and cost constraints. Hence, the problem is polynomial-time solvable, in contrast to the hardness for the case $s=3$, \Cref{rem:hardness}.
\begin{theorem}\label{thm:bundlesize-2}
When $s=2$, \pname admits a polynomial-time algorithm.
\end{theorem}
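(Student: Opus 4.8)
The plan is to reduce \pname with bundle size at most $2$ to the problem of finding a \emph{perfect matching} in an auxiliary graph, which is polynomial-time solvable (Edmonds' blossom algorithm). First I would record a structural fact about any solution: since every agent's bundle must have utility at least $P\ge 1$, no bundle is empty, and with bundles of size at most $2$ a solution therefore partitions the $n$ items into exactly $k$ bundles, of which $n-k$ have size $2$ and $2k-n$ have size $1$. Consequently, if $n>2k$ or $n<k$ the instance is trivially a \no-instance (the latter case will also fall out of the reduction automatically). So assume $k\le n\le 2k$.

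Next I would build an auxiliary graph $H^\star$ on vertex set $V\sqcup D$, where $D=\{d_1,\dots,d_{2k-n}\}$ is a set of fresh ``dummy'' vertices, so that $|V\sqcup D|=2k$. Its edges are: (i) for $u,v\in V$, an edge $uv$ whenever $uv\notin E(G)$, $c_u+c_v\le B$, and $p_u+p_v\ge P$ --- these edges encode the \emph{feasible size-$2$ bundles}; and (ii) for every $d_j\in D$ and every $w\in V$ that is a \emph{feasible singleton} (i.e.\ $c_w\le B$ and $p_w\ge P$), an edge $d_jw$. Crucially, $H^\star$ contains \emph{no} edge between two dummy vertices. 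Thus every edge of $H^\star$ represents a feasible bundle (with a dummy acting as a placeholder for a singleton).

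The key claim is that $\mathcal{I}$ is a \yes-instance of \pname with bundle size at most $2$ if and only if $H^\star$ has a perfect matching. For the forward direction, a solution decomposes into $n-k$ feasible independent pairs and $2k-n$ feasible singletons; take the $n-k$ pair-edges of $H^\star$ and match each singleton to its own dummy, obtaining $k$ pairwise disjoint edges that saturate $V\sqcup D$. Conversely, given a perfect matching of $H^\star$: since there is no dummy--dummy edge, each of the $2k-n$ dummies is matched to a distinct feasible singleton of $V$, and the remaining $2(n-k)$ vertices of $V$ are matched among themselves by $n-k$ edges of $H^\star$, each a feasible independent pair; reading these off yields $k$ non-empty, independent, cost- and utility-feasible bundles partitioning $V$. (Note that if $n<k$ then $|D|=2k-n>n=|V|$, so no dummy can even be matched and $H^\star$ has no perfect matching, consistent with the trivial analysis.) Building $H^\star$ costs $O(n^2+nk)$ time, a perfect matching can be found or ruled out in polynomial time, and backtracking over the matching recovers the partition.

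The only delicate point is getting the dummy gadget exactly right: one must use precisely $2k-n$ dummies and must refrain from adding dummy--dummy edges. Dummy--dummy edges would amount to ``wasted'' bundles and would let a matching certify a partition into fewer than $k$ non-empty bundles --- which is impossible here because $P\ge 1$ forbids empty bundles; conversely, too few or too many dummies would misrepresent the forced split into $n-k$ pairs and $2k-n$ singletons. Everything else is routine. (In the general, non-identical setting, the feasibility of a bundle depends on which agent receives it, so the same idea is carried out with an auxiliary structure that simultaneously selects bundles and assigns them to agents; we defer this extension.)
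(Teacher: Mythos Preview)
Your proposal is correct and follows the same core idea as the paper: reduce \pname with $s=2$ to perfect matching in an auxiliary graph whose edges encode feasible bundles. In fact, your argument is \emph{more complete} than the paper's. The paper explicitly restricts to the case ``when every bundle is of size two'': it checks $|V(G)|=2k$, builds the auxiliary graph only on $V(G)$ with edges for feasible non-adjacent pairs, and tests for a perfect matching. Your version additionally handles the mixed case where some bundles are singletons, via the $2k-n$ dummy vertices joined only to feasible singletons; this is exactly what is needed for the full $s\le 2$ statement, and your counting argument ($n-k$ pairs, $2k-n$ singletons, forced by $P\ge 1$) is the right justification. The caveat about avoiding dummy--dummy edges is well taken.
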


\begin{proof}
Let \instance be an instance of \pname where $s =2$. We will prove the result when every bundle is of size two.
If the size of the vertex set of $G$ is not $2k$, then we output ``no''. Else, we define an auxillary graph $H$ on the vertex set $V(G)$, such that $\{a, b\}\sse V(G)\sm E(G)$ and $u(a) + u(b) \geq P$ and $c(a) + c(b) \leq B$. That is, every feasible 2-independent set of size two among the vertices of $G$ share an edge in $H$. 
We compute the maximum matching in $H$. If it is not perfect, we answer ``no''. Else, we  answer ``yes" and output the $k$ bundles which correspond to the matching edges in $H$. 
The proof of correctness is quite easy to see, but we will nevertheless, discuss for the sake of completeness.

Suppose that \Co{I} is a \yes-instance, where every bundle has size two. Then, clearly $|V(G)|=2k$ and there is a perfect matching in $H$. Hence, our algorithm will be able to detect it. If our algorithm outputs ``yes'', then the matching in $H$ constitutes $2$-independent partition in $G$. Since, $|V(G)|=2k$, so we can conclude that \Co{I} is a \yes-instance. 

\end{proof}

\paragraph{Approximation Algorithm.} Next, we present an \FPTAS for \pname by repurposing the dynamic programming approach of~\Cref{thm:ppoly-tw}
and the standard bucketing technique to split the range of the utility and cost values into smaller ``buckets" so as to approximately estimate the value of each subproblem. Specifically, we show that for any positive real numbers $\epsilon, \omega\!>\!0$ and a $k$-independent partition $(X_1,  \ldots, X_k)$ with PC-value $((P_1,C_1), \ldots, (P_k, C_k))$ obtained in \Cref{thm:ppoly-tw}, we can obtain an approximate PC-value $((\wh{P_1},\wh{C_1}), \ldots, (\wh{P_k},\wh{C_k}))$ such that $P_j \le (1 + \epsilon)\wh{P_{i}}$ and $\wh{C_j} \le (1 + \omega)C_{i}$ for each $j \in [k]$.


\begin{theorem}\label{thm:FPTAS-tw}
\pname admits an algorithm that in $\FPT(k+\tw)$ time outputs a solution in which every agent has profit at least $P/(1+\epsilon)$
 and cost at least $(1+\omega)B$.





\end{theorem}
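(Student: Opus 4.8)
The plan is to recover the $\left(\tfrac{1}{1+\epsilon},\,1+\omega\right)$ bicriteria guarantee by running, essentially unchanged, the tree-decomposition dynamic program of \Cref{thm:ppoly-tw}, but replacing its exact PC-values with coarse, geometrically spaced \emph{bucketed} PC-values. Recall that the cost of that algorithm is driven by the number $(\alpha\gamma)^{2k}$ of distinct vectors a cell $\pc[t,f]$ may hold, because each of the $k$ colour classes records an exact profit in $\{0,\dots,\alpha\}$ and an exact cost in $\{0,\dots,\gamma\}$. The approximation compresses both ranges: in place of an exact profit $P_i$ we keep the index $b$ of the bucket $[(1+\epsilon)^{b},(1+\epsilon)^{b+1})$ that contains it, and likewise for cost with base $(1+\omega)$. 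This leaves $\BigO{\log_{1+\epsilon}\alpha}$ profit buckets and $\BigO{\log_{1+\omega}\gamma}$ cost buckets per class, so each cell now stores at most $(\log_{1+\epsilon}\alpha\cdot\log_{1+\omega}\gamma)^{2k}$ vectors, which is precisely the bound recorded for this theorem in \Cref{table:results}.

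First I would fix the rounding convention that makes the bicriteria bounds fall out. Both profits and costs are rounded \emph{down} to the lower endpoint of their enclosing bucket, so the stored values $\wh{P_i},\wh{C_i}$ never overshoot the true ones and obey $\wh{P_i}\le P_i\le(1+\epsilon)\wh{P_i}$ and $\wh{C_i}\le C_i\le(1+\omega)\wh{C_i}$. The leaf, introduce, forget, and join recurrences of \Cref{thm:ppoly-tw} are then copied verbatim, the only change being that each coordinate is re-expressed through its bucket after every vector addition (and after the subtraction of $\tilde b$ at a join node). I would declare a \yes-instance exactly when some vector of $\pc[r,f]$ has all $k$ profit coordinates at least $P/(1+\epsilon)$ and all $k$ cost coordinates at most $B$. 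Unwinding the two inequalities above, an accepted vector certifies a partition in which every agent's true profit is at least $P/(1+\epsilon)$ and every agent's true cost is at most $(1+\omega)B$ — i.e.\ the budget is relaxed by the factor $1+\omega$ to the cost bound $(1+\omega)B$ appearing in the theorem statement and in \Cref{table:results}. Conversely, an exactly feasible partition (true profit $\ge P$, true cost $\le B$) passes both relaxed tests, since rounding profits down keeps the stored profit at least $P/(1+\epsilon)$ while rounding costs down keeps the stored cost at most $B$.

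The main obstacle — and the step demanding the most care — is controlling how the multiplicative error \emph{accumulates} down the decomposition. A single re-rounding costs one factor of $(1+\epsilon)$ or $(1+\omega)$, but the vertices of a fixed bundle are introduced at many nodes scattered over a tree of depth $\Theta(\tw\cdot n)$, so a naive per-node analysis would compound these factors far beyond the allowed $(1+\epsilon)$ and $(1+\omega)$. The resolution I would adopt is to never re-round the running sums: each cell stores the \emph{exact} accumulated profit and cost, and the bucket index is used only as a key for de-duplication, keeping a single Pareto-representative per (profit-bucket, cost-bucket) cell. One then argues that the optimal trajectory is, at every node, dominated by a surviving representative whose profit and cost differ from it by at most one bucket \emph{in total} rather than one bucket per step, so the distortion is charged only once and stays within the single factors $(1+\epsilon)$ and $(1+\omega)$. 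With this accounting the correctness argument of \Cref{thm:ppoly-tw} transfers almost verbatim, and multiplying the per-cell size $(\log_{1+\epsilon}\alpha\cdot\log_{1+\omega}\gamma)^{2k}$ by the $\BigO{\tw\cdot n}$ nodes and the at most $k^{\tw+1}$ colourings per bag yields the claimed $\FPT(k+\tw)$ running time.
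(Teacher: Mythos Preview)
Your high-level plan — re-running the DP of \Cref{thm:ppoly-tw} on bucketed PC-values — matches the paper. The gap is in the step you yourself flag as the ``main obstacle'': controlling accumulation. Your fix (store exact sums, use buckets only as de-duplication keys, keep one ``Pareto-representative'' per cell) does \emph{not} prevent accumulation. When two exact PC-values share a bucket-cell and you discard one, you have effectively rounded the survivor relative to the optimal trajectory by up to a full $(1+\epsilon)$ (resp.\ $(1+\omega)$) factor; at the next introduce node you add the new vertex's contribution and de-duplicate again, which can cost another full factor. Concretely, if the partial optimum sits near the top of the bucket $[1,1+\epsilon)$ and the surviving representative near the bottom, adding the same small increment can land them in different buckets, after which the representative may be trimmed again within its own bucket — so the ratio grows step by step. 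With $\Theta(n)$ introduce nodes this compounds to $(1+\epsilon)^{\Theta(n)}$, not $(1+\epsilon)$. Your sentence ``the distortion is charged only once'' is exactly the assertion that needs proof, and it is false for the scheme you describe.

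The paper handles this by making the buckets \emph{finer}: the interval endpoints are $(1+\epsilon)^{i/n}$ and $(1+\omega)^{i/n}$, so each rounding step costs only a factor $(1+\epsilon)^{1/n}$ (resp.\ $(1+\omega)^{1/n}$). The inductive Claim~\ref{claim:fptas} then shows that after processing the $i$-th vertex the stored and true values differ by at most $(1+\epsilon)^{i/n}$ and $(1+\omega)^{i/n}$, which telescopes to $(1+\epsilon)$ and $(1+\omega)$ at $i=n$. The price is that the number of buckets per coordinate becomes $r=\lceil n\log_{1+\epsilon}\alpha\rceil$ and $s=\lceil n\log_{1+\omega}\gamma\rceil$ rather than your $\log_{1+\epsilon}\alpha$ and $\log_{1+\omega}\gamma$; this extra $n$-dependence is absorbed into the $n^{\BigO{1}}$ of the running time in \Cref{table:results}. (A secondary difference: the paper rounds profits \emph{down} and costs \emph{up}, which makes the two directions of the bicriteria bound come out directly.)
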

\begin{proof}
Let \instance be an instance of \pname, where $k$ is a constant. Recall that $\alpha$ and $\gamma$ are the aggregate \utility and cost of all the items, respectively. For a constant $k$, the pseudo-polynomial algorithm in Theorem~\ref{thm:ppoly-tw} works as follows. At a node $t$ and for a coloring $f$ of $\bt_{t}$, the defined subproblem $\pc[t,f]$ is the set of all possible PC-values of every $k$-coloring of $G[V_t]$ that agrees with $f$ on $\bt_{t}$. Thus, at the root node $r$, we have all possible PC-values of any $k$-coloring of $G$. Then we check which all PC-values satisfy the \utility and cost constraints. If none of them meet the constraints, then $\mathcal{I}$ is a \no-instance. 

Indeed, at a node $t$, we may be updating $(\profit \cost)^{k}$ (at most) PC-values for coloring $f$ of $G[\bt_{t}]$ which is a primary challenge in achieving a polynomial-time algorithm. Consequently, at a node $t$ and coloring $f$ of $\bt_t$, we store approximate PC-values of a $k$-independent partition of $G[V_{t}]$ that agrees with $f$ on $\bt_{t}$. Towards the end, for positive real numbers $\epsilon, \omega\!>\!0$ and for any $k$-independent partition $(X_1,  \ldots, X_k)$ with PC-value $((P_1,C_1), \ldots, (P_k, C_k))$, we want to obtain a {\it relaxed} PC-value $((\wh{P_1},\wh{C_1}), \ldots, (\wh{P_k},\wh{C_k}))$, such that $P_j \le (1 + \epsilon)\wh{P_{i}}$ and $\wh{C_j} \le (1 + \omega)C_{i}$ for each $j \in [k]$. 

To achieve this, we divide the range of \utility and cost value. Note that, in $k$-independent partition $(X_1,X_2, \ldots, X_{k})$ of $V(G)$, the \utility and the cost of a set $X_i$ ranges from $0$ to $\profit$ and $0$ to $\cost$, respectively. For a given positive value $0<\epsilon \leq 1$, we split the \utility value of each set $X_i$ into intervals whose endpoints vary by a factor of $(1+\epsilon)$: 
\begin{equation*}
    \begin{split}
    (0, (1+\epsilon)^{\frac{1}{n}}],\ldots, ((1+\epsilon)^{\frac{i-1}{n}}, (1+\epsilon)^{\frac{i}{n}}], \ldots 
    \end{split}
\end{equation*}

where, $i\leq \ceil{n\log_{1+\epsilon} \alpha}$.
Similarly, for $\omega \in (0,1)$, we split the cost value of each set $X_i$ into $s = \ceil{ n \cdot \log_{(1+ \omega)}\gamma}$ intervals whose end points vary by a factor of $(1+\omega)$:
\begin{equation*}
    \begin{split}
    (0,(1+\omega)^{\frac{1}{n}}], \ldots, ((1+\omega)^{\frac{i-1}{n}}),(1+\omega)^{\frac{i}{n}})],\ldots 
    \end{split}
\end{equation*}

Note that at an introduce node $t$, we look for the possibility of assigning vertex $v$ to some set in a $k$-partition $(X_1, X_2, \ldots, X_k)$ of $V_{t}$ and the $PC$-value corresponding to this partition gets updated. Let $((P_1,C_1),\ldots, (P_k,C_k))$ denote the PC value before assigning $v$ to the set $X_j$. Thus, after update, the resulting PC-value is $(\ldots, (P_j+p_{v}, C_j+c_v),\ldots, (P_k,C_k))$. Similar arguments hold for the other nodes as well. For every operation, we can pin-point the sets in the PC value that is affected. This allows us to bound the change in value of the PC-values in an approximate manner.

Specifically, for the introduce node, the updated \utility of $X_j$ is rounded down to the nearest interval endpoint to which the value below belongs. That is, suppose that $P_j+ p_v \in [(1+\epsilon)^{\frac{i}{n}}, (1+\epsilon)^{\frac{i+1}{n}})$ for some $i \leq \ceil{n \log_{(1+\epsilon}\profit}$, then the approximate \utility value is at least $(1+\epsilon)^{\frac{i}{n}}$. Similarly, the updated cost value is rounded up to the nearest interval endpoint. Hence, $C_j+c_v$ is at most $(1+\omega)^{\frac{i'}{n}}$ for some $i' \leq \ceil{n\log_{(1+\omega)}\gamma}$. Similar argument hold for the forget and join nodes as well.The \utility (and cost) values of the sets that are affected by our operation can rounded down (and up), as described above. Through the next claim, we establish a relationship between the standard PC-value computed by the exact algorithm (of \Cref{thm:ppoly-tw}) and the relaxed PC-value computed by the approximation algorithm at every step. The factor of $(1+\epsilon, 1+\omega)$ follows as a result.
\begin{claim}
\label{claim:fptas}
Consider a PC-value, $((P_1, C_1),  \ldots,  (P_k, C_k))$, generated by the algorithm \Cref{thm:ppoly-tw} after assigning the first $i-1$ vertices. Suppose that the $i^{th}$ vertex affects the $j^{th}$ set. Then, there exists a relaxed PC-value $((\wh{P_1},\wh{C_1}), \ldots, (\wh{P_k},\wh{C_k}))$ such that $P_j \le (1 + \epsilon)^{i/n}\wh{P_{j}}$ and $\wh{C_j} \le (1 +\omega)^{i/n}C_{j}$. 
\end{claim}
\begin{proof}
   We prove this using induction on $i$. For $i=1$, there is only one vertex $v$ in the $j^{th}$ set of the partition. Thus $P_j = p_v$ and $C_{j} = c_v$. In the FPTAS, there is a state where $P'_j$ is the largest lower interval endpoint not exceeding $P_j$, and $C'_j$ is the lowest above interval endpoint exceeding $C_j$. It follows from the construction of the intervals that $(1 + \epsilon)^{1/n}P'_{j} \ge P_{j}$ and $(1 + \omega)^{1/n}C_{j} \ge C'_{j}$. 

   Suppose the claim is true for first $i-1$ vertices and now we are considering the assignment of $i^{th}$ vertex, say $v$.  Suppose that the $i^{th}$ vertex affects the $j^{th}$ set. That is, $i^{th}$ vertex is assigned to the $j^{th}$ set of $k$-independent partition $(X_1,\ldots, X_k)$. Precisely, the \utility and cost of $X_j$ is updated to $P_j +p_v$ and $C_j + c_v$, respectively. Due to induction hypothesis, before this update the $P_j$ and $C_j$ there exists a relaxed value $(\wh{P_j}, \wh{C_j})$ that satisfy $P_j \le (1 + \epsilon)^{(i-1)/n}\wh{P_{j}}$ and $\wh{C_j} \le (1 + \omega)^{(i-1)/n}C_{j}$. While updating the \utility, $p_v$ is added to $\wh{P_j}$, and the result is rounded down to a lower \utility interval endpoint $\wh{P_j'}$ such that $(1+ \epsilon)^{1/n}\wh{P_j'} \ge \wh{P_j} +p_v$. Using induction hypothesis we have that  $(1+ \epsilon)^{1/n}\wh{P_j'} \ge \wh{P_j} +p_v \ge (1 + \epsilon)^{-((i-1)/n)}P_j + p_v \ge (1 + \epsilon)^{-((i-1)/n)}(P_j + p_v)$. This concludes that, $P_j +p_v \le (1+ \epsilon)^{i/n}\wh{P_j'}$. In similar manner, while updating the cost, $c_v$ is added to $\wh{C_j}$, and the result is rounded down to a upper cost interval endpoint $\wh{C_j'}$ such that $(1+ \omega)^{1/n}(\wh{C_j} +c_v ) \ge \wh{C'_j} $. Using induction hypothesis we have that  $\wh{C'_j} \le (1+ \omega)^{1/n}( (1+ \omega)^{i-1}{C_j} +c_v )) \le (1+ \omega)^{1/n}((1+ \omega)^{i-1}({C_j} +c_v )) \le (1 + \omega)^{i/n}(C_j + v_v)$. Therefore the claim follows. 
\end{proof}

Note that each set $\pc[t,f]$ in the approximation algorithm is consists of at most $r \cdot s$ entries. Thus the running time of the algorithm is $\BigO{\tw^{k} \cdot (rs)^{2k} \cdot n}$, where where, $r= \ceil{n\log_{1+\epsilon} \alpha}$ and $s = \ceil{ n \cdot \log_{(1+ \omega)}\gamma}$. Hence, when $k$ is a constant, the result follows from Theorem~\ref{thm:ppoly-tw} and Claim~\ref{claim:fptas}. 
\end{proof}
Consequently, \Cref{thm:FPTAS-tw} yields the following.

\begin{cor}
    When $k$ is a constant and the \gname has constant treewidth, \pname admits an \FPTAS.

\end{cor}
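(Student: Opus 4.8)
The plan is to obtain this corollary directly from \Cref{thm:FPTAS-tw} by a short running-time bookkeeping argument, checking that under the hypotheses ``$k=\BigO{1}$'' and ``$\tw=\BigO{1}$'' the time bound recorded there becomes polynomial in the input size and in $1/\epsilon,1/\omega$, which is exactly the defining property of an \FPTAS. First I would recall what \Cref{thm:FPTAS-tw} delivers: on a \yes-instance it produces a $k$-independent partition in which each agent has \utility at least $P/(1+\epsilon)$ and cost at most $(1+\omega)B$, running in time $\BigO{\tw^{k}(rs)^{2k}n}$ with $r=\ceil{n\log_{1+\epsilon}\alpha}$ and $s=\ceil{n\log_{1+\omega}\gamma}$.

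The one estimate worth making explicit is that $\log_{1+\epsilon}\alpha$ and $\log_{1+\omega}\gamma$ are polynomially bounded. Using $\ln(1+x)\ge x/2$ for $x\in(0,1]$ gives $\log_{1+\epsilon}\alpha\le 2\epsilon^{-1}\ln\alpha$ and $\log_{1+\omega}\gamma\le 2\omega^{-1}\ln\gamma$; since the costs and utilities appear in the input, $\ln\alpha$ and $\ln\gamma$ are at most the input length. Hence $r$ and $s$ are polynomial in the input size, $1/\epsilon$, and $1/\omega$, and substituting the constants $k$ and $\tw$ turns $\BigO{\tw^{k}(rs)^{2k}n}$ into a bound polynomial in those same quantities. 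Treating $\epsilon$ and $\omega$ as the approximation parameters (or simply fixing $\omega=\epsilon$) then exhibits a family of algorithms matching the \FPTAS definition recalled in the Preliminaries --- with the understanding that, in line with \Cref{thm:FPTAS-tw}, the scheme is bicriteria, relaxing both the \utility floor $P$ and the budget ceiling $B$ by a $(1+\epsilon)$ factor.

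I do not expect a genuine obstacle here: the corollary is essentially an observation about the shape of the running-time expression in \Cref{thm:FPTAS-tw}. The only step needing a sentence of care is the logarithm-to-polynomial conversion above, together with a remark that the approximation is two-sided, so that ``\pname admits an \FPTAS'' is read in the bicriteria sense consistent with the theorem it is derived from.
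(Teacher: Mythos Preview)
Your proposal is correct and follows the same approach as the paper, which simply records the corollary as an immediate consequence of \Cref{thm:FPTAS-tw} without any additional argument. Your explicit verification that $r$ and $s$ are polynomial in the input size and in $1/\epsilon,1/\omega$ (via the $\ln(1+x)\ge x/2$ bound), and your remark that the scheme is bicriteria, spell out details the paper leaves implicit; both are appropriate and the latter is a useful clarification given the single-criterion phrasing of the \FPTAS definition in the Preliminaries.
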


\subsection{About general utility and cost functions}
\label{ss:generalizing-to-gen-valuations}
Throughout the paper, we have presented algorithms for identical \utility and cost functions (between agents). These results can be extended to arbitrary additive \utility and cost functions. Consider the polynomial-time algorithm in \Cref{thm:ppoly-tw}. In case of arbitrary additive \utility and cost functions we replace $\alpha$ with $\wh{\alpha}=\max_{i\in[k]}\max\{p_i(j)|j\in [n]\}$ and $\gamma$ with $\widehat{\gamma}=\max_{i\in[k]}\max\{c_i(j)|j\in [n]\}$. Moreover, while updating the PC-values of a $k$-independent partition $(X_1,X_2, \ldots, X_k)$ of $V_{t}$ at any node $t$, to update the \utility and cost of $X_i$ we use the valuation and cost function of agent $j$. Similarly, in \Cref{thm:const_type}, \hide{instead of defining the type of an instance as set of different utility and cost pairs,} we can redefine the {\em type} of an instance as the union of set of different utility and cost pairs of each agent. Then, the type of elements in $X_j$ can be updated according to the functions of agent $j$. Thus, the results of \Cref{thm:ppoly-tw,thm:const_type} and their corollaries can be extended to general valuation and cost functions. 


\section*{Future Direction}
We initiated the study of conflict-free fair division under budget constraints. This has applications to various real-life scenarios. We considered the egalitarian fair division as the fairness criteria for our study. Below we describe a few future research directions that we view as important and promising: (i) other fairness notions, such as envy-freeness, NSW, maximin, Pareto optimality, etc, (ii) dependencies between the items, e.g connectivity (iii) two-sided preferences.


\vspace{0.3cm}
\noindent\textbf{Acknowledgement:} Sushmita Gupta, Pallavi Jain, and Vikash Tripathi acknowledges their support from SERB-SUPRA grant number S/SERB/PJ/20220047.  Sushmita Gupta also acknowledges support from SERB's MATRICS grant (MTR/2021/000869).  Pallavi Jain also acknowledges support from IITJ Seed Grant grant I/SEED/PJ/20210119.

\bibliographystyle{ACM-Reference-Format} 
\bibliography{arXiv_references.bib}
\clearpage






\end{document}